\documentclass[acmsmall, screen, nonacm]{acmart}

\usepackage{mathtools}
\usepackage{graphicx}
\usepackage{xcolor}
\usepackage{mathpartir}
\usepackage{dashbox}
\usepackage{hyperref}
\usepackage{cleveref}
\usepackage{tikz}
\usetikzlibrary{
  arrows.meta,
  automata,
  positioning,
  shapes,
  patterns,
  arrows.meta,
  calc
}
\usepackage{fancybox}
\usepackage{subcaption}
\usepackage{listings}
\usepackage{paralist}
\usepackage{todonotes}
\usepackage{ifthen}
\newboolean{fullversion}
\setboolean{fullversion}{true} % true: arxiv version, false: final version
\iffalse
\newcommand\arthur[1]{\textcolor{red}{Arthur: #1}}
\newcommand\jana[1]{\textcolor{blue}{Jana: #1}}
\newcommand\haoyi[1]{\textcolor{green}{Haoyi: #1}}
\else
\newcommand\arthur[1]{}
\newcommand\jana[1]{}
\newcommand\haoyi[1]{}
\fi

\definecolor{stepcol}{rgb}{0.0, 0.53, 0.74}
\definecolor{obscol}{rgb}{0.0, 0.53, 0.74}
\definecolor{hcol}{RGB}{197, 40, 61}

\newcommand{\set}[1]{\{#1\}}
\newcommand{\tup}[1]{( #1 )}
\newcommand{\ldot}{\mathpunct{.}}
\newcommand{\Nat}{\mathbb{N}}
\newcommand{\Int}{\mathbb{Z}}
\newcommand{\cons}{\hspace{-2pt}::\hspace{-2pt}}

\newcommand{\transsys}{\mathcal{T}}

\newcommand{\hw}{\mathcal{H}}
\newcommand{\sw}{\mathcal{C}}

\newcommand{\lstep}[1]{{\color{stepcol}\xrightarrow{{\color{obscol}#1}}_\sw} \,}

\newcommand{\hweval}[1]{{\color{hcol}\llbracket} #1 {\color{hcol}\rrbracket_{\hw}}}
\newcommand{\ceval}[1]{{\color{stepcol}\llbracket} #1 {\color{stepcol}\rrbracket_{\sw}}}
\newcommand{\hwobs}[3]{\hweval{#1}(#2, #3)}
\newcommand{\cobs}[2]{\ceval{#1}(#2)}
\newcommand{\hwsem}{\hweval{\hspace{-3pt} - \hspace{-3pt}}}
\newcommand{\csem}{\ceval{\hspace{-3pt} - \hspace{-3pt}}}

\newcommand{\kywd}[1]{\mathbf{#1}}

\newcommand{\loadKywd}{\kywd{load}}

\newcommand{\jzKywd}{\kywd{beqz}}

\newcommand{\addKywd}{\kywd{add}}

\newcommand{\pseq}[2]{#1 ; #2}

\newcommand{\tadd}[3]{\addKywd\ #1, #2, #3}

\newcommand{\pload}[2]{\loadKywd\ #1, #2}

\newcommand{\pjz}[2]{\jzKywd\ #1, #2}

\newcommand{\pc}{\mathit{pc}}

\newcommand{\ms}[1]{\ensuremath{\mathsf{#1}}}
\newcommand{\mi}[1]{\ensuremath{\mathit{#1}}}
\newcommand{\col}[2]{\ensuremath{{\color{#1}{#2}}}}
\newcommand{\archStyle}[1]{\ms{\col{stepcol}{#1}}}
\newcommand{\muarchStyle}[1]{\ms{\col{hcol}{#1}}}
\newcommand{\cstepstyle}[1]{{\mi{\col{stepcol}{#1}}}}
\newcommand{\hwstepstyle}[1]{{\mi{\col{hcol}{#1}}}}

\newcommand{\archsingleStep}[1]{\,\Rightarrow_\mi{isa}\,}
\newcommand{\hwsingleStep}[1]{~\muarchStyle{\Rightarrow^{\hspace{-7.5pt}\raisebox{2pt}{\scriptsize#1}}_\hw}~}

\newcommand{\amstep}[1]{~\archStyle{\xrightarrow{\cstepstyle{#1}}_\sw}~}
\newcommand{\hwstep}[1]{~\muarchStyle{\xrightarrow{\hwstepstyle{#1}}_\hw}~}

\newcommand{\trans}{\mathit{next}}
\newcommand{\leak}{\mathit{leak}}
\newcommand{\den}[1]{\llbracket #1 \rrbracket}

\newcommand{\gquadruple}[5]{\fbox{$#1$} \vdash \begin{bmatrix}
  #2 & #4\\
  #3 & #5
\end{bmatrix}}
\newcommand{\quadruple}[5]{\dbox{$#1$} \vdash \begin{bmatrix}
  #2 & #4\\
  #3 & #5
\end{bmatrix}}
\newcommand{\ngquadruple}[5]{#1 \vdash \begin{bmatrix}
  #2 & #4\\
  #3 & #5
\end{bmatrix}}

\newcommand{\sgquadruple}[5]{\fbox{$#1$} \vdash_\texttt{lockstep} \begin{bmatrix}
  #2 & #4\\
  #3 & #5
\end{bmatrix}}

\renewcommand\comment[1]{\textrm{\color{gray}(#1)}}
\newcommand\pcomment[1]{\textrm{\color{gray}(#1)}\quad}

\definecolor{Blue3}{HTML}{0000CD}
\definecolor{Green4}{HTML}{008B00}
\definecolor{Red3}{HTML}{CD0000}
\definecolor{orange}{rgb}{0.8, 0.47, 0.196}
\lstdefinestyle{Cstyle}
{
	frame = tb,
  belowskip=.4\baselineskip,
  aboveskip=.4\baselineskip,
  	showstringspaces = false,
  	breaklines = true,
  	breakatwhitespace = true,
  	tabsize = 3,
  	numbers = left,
    stepnumber = 1,
    numberstyle = \tiny\color{gray},
    language = {[ANSI]C},
    alsoletter={.\$},
    basicstyle={\ttfamily\color{black}},
    keywordstyle={\ttfamily\color{Green4}},
	morecomment=[l][\small\itshape\color{purple!40!black}]{//},
	sensitive=true,
}

\theoremstyle{plain}
\newtheorem{theorem}{Theorem}

\theoremstyle{definition}
\newtheorem{definition}[theorem]{Definition}

\AtBeginEnvironment{example}{%
  \pushQED{\qed}%
}
\AtEndEnvironment{example}{\popQED\endexample}

\begin{document}

\author{Arthur Correnson}
\orcid{0000-0003-2307-2296}
\affiliation{%
  \institution{CISPA Helmholtz Center for Information Security}
  \city{Saarbruecken}
  \country{Germany}
}
\email{arthur.correnson@cispa.de}

\author{Haoyi Zeng}
\orcid{0009-0007-2506-3787}
\affiliation{%
  \institution{Harvard University}
  \city{Cambridge, MA}
  \country{USA}
}
\email{haoyizeng@g.harvard.edu}

\author{Jana Hofmann}
\orcid{0000-0003-1660-2949}
\affiliation{
  \institution{Max Planck Institute for Security and Privacy (MPI-SP)}
  \city{Bochum}
  \country{Germany}
}
\email{jana.hofmann@mpi-sp.org}

% \makeatletter
% \everymath{%
%   \ifodd\thepage\allowdisplaybreaks[0]%
%     \else \allowdisplaybreaks[4]%
%   \fi
% }
% \makeatother

  \title{A Deductive System for Contract Satisfaction Proofs}

  \subtitle{Extended Version}

  \begin{abstract}
    Hardware-software contracts are abstract specifications of a CPU's leakage behavior.
    They enable verifying the security of high-level programs against side-channel attacks without having to explicitly reason about the microarchitectural details of the CPU.
    Using the abstraction powers of a contract requires proving that the targeted CPU \emph{satisfies} the contract in the sense that the contract
    over-approximates the CPU's leakage.
    Besides pen-and-paper reasoning, proving contract satisfaction has been approached mostly from the model-checking perspective, with approaches based on a (semi-)automated search for the necessary invariants.

    As an alternative, this paper explores how such proofs can be conducted in interactive proof assistants.
    We start by observing that contract satisfaction is an instance of a more general problem we call \emph{relative trace equality}, and we introduce \emph{relative bisimulation} as an associated proof technique.
    Leveraging recent advances in the field of coinductive proofs, we develop
    a deductive proof system for relative trace equality.
    Our system is provably sound and complete, and it enables a modular and incremental proof style.
    It also features several reasoning principles to simplify proofs by exploiting symmetries and transitivity properties.
    We formalized our deductive system in the Rocq proof assistant and applied it to two challenging contract satisfaction proofs.
  \end{abstract}

%   \begin{CCSXML}
% <ccs2012>
%    <concept>
%        <concept_id>10003752.10003790.10002990</concept_id>
%        <concept_desc>Theory of computation~Logic and verification</concept_desc>
%        <concept_significance>300</concept_significance>
%        </concept>
%    <concept>
%        <concept_id>10002978.10003001</concept_id>
%        <concept_desc>Security and privacy~Security in hardware</concept_desc>
%        <concept_significance>300</concept_significance>
%        </concept>
%    <concept>
%        <concept_id>10002978.10003006.10011608</concept_id>
%        <concept_desc>Security and privacy~Information flow control</concept_desc>
%        <concept_significance>300</concept_significance>
%        </concept>
%  </ccs2012>
% \end{CCSXML}

\ccsdesc[300]{Theory of computation~Logic and verification}
\ccsdesc[300]{Security and privacy~Security in hardware}
\ccsdesc[300]{Security and privacy~Information flow control}

  \maketitle

  \section{Introduction}

  When multiple parties perform computations on shared hardware (e.g., in cloud computing), potentially secret data leaks into the microarchitecture and can be observed by other parties hosted on the same machine.
  The effectiveness of such side-channel attacks has been impressively demonstrated by the Meltdown~\cite{meltdown} and Spectre~\cite{spectre} attacks and the many exploits that followed~\cite{fallout, zombieload, ridl, foreshadow, lvi, inception, bhi}.
  Since disabling the sharing of resources is not an option for performance reasons, a lot of work has been invested on how to defend against such attacks.

  \paragraph{Hardware-Software Contracts}

  Developing low-impact software defenses requires a precise specification of the hardware's leakage behavior.
  Hardware-software contracts address this need by formally characterizing leakage at the level of the instruction set architecture (ISA).
  This enables software developers to assess a program's leakage without requiring deep knowledge of the microarchitectural details.
  For this approach to be correct, the contract needs to soundly over-approximate the leakage of the microarchitecture.
  This property, called \emph{contract satisfaction}, is usually expressed as follows~\cite{hwswcontracts}:
  
  \begin{quote}
    \textit{For any program $P$ and two inputs $\sigma, \sigma'$, if the contract predicts the same leakage when running $P$ on $\sigma$ and $\sigma'$, then the two runs are also indistinguishable on the hardware.}
  \end{quote}
  We make this contract satisfaction property more formal in \Cref{sec:HW-SW-contracts}.
  For now, note that the property is relational, i.e., a hyperproperty, as it reasons about two contract traces and two hardware traces.
  In this paper, we develop a deductive system dedicated to proving relational properties of this form.

  \paragraph{The Challenges of Proving Contract Satisfaction}

  So far, the task of validating that a given microarchitecture satisfies a contract has been tackled using testing~\cite{revizor, hideandseek, speculationatfault, amulet} and model checking~\cite{Leave,shadowlogic,upec}.
  Such automated techniques are often very effective, but they rely on the correctness of the underlying algorithms and their implementations.
  There also exist some rigorous pen-and-paper proofs of contract satisfaction, but their complexity and length make them difficult to verify.
  As an example, the technical report of the paper proposing hardware-software contracts provides a contract satisfaction proof for a simple processor featuring speculation and out-of-order execution. The corresponding pen-and-paper proof consists of 25 dense pages~\cite{hwswcontracts}.

  To achieve an additional level of formality, we set ourselves the goal of developing 
  contract satisfaction proofs inside a proof assistant (Rocq, in our case).
  This is challenging, as it involves 4-ary relational reasoning between execution traces induced by two different semantics.
  Additionally, executions progress \emph{asynchronously} (one contract execution step can correspond to several hardware steps and vice-versa), and relational invariants need to account for these potential misalignments.

  \paragraph{Existing Approaches}

  The challenges induced by this kind of 4-ary reasoning are not specific to hardware-software contracts.
  Similar issues arise whenever the security of two different systems is put in relation, for example, in relative security~\cite{DongolGPW24}, information-preserving refinement~\cite{AthalyeCKTZ24, knox}, and secure compilation~\cite{specconsttime, ctsim, ctjasmin, niopt, compcertsec, Wall2024SNIPSE, exorcisingspectres}.
  Proof techniques for such properties typically require guessing sophisticated witnesses up-front, often some form of simulation relation.
  In secure compilation, for example, a common technique defines three simulation relations: one between source states, one between target states, and one bridging the first two.
  This results in a three-dimensional simulation diagram justifying that equality of observations at the source level implies equality of observations at the target level \cite{specconsttime, ctsim, ctjasmin, niopt, compcertsec}.
  Coming up with such relations requires a deep understanding of the studied compiler.
  In the context of hardware-software contracts, there is no compiler between the high-level (the contract) and the low-level (the hardware).
  Instead, both sides execute the same program, expressed in the same ISA, but under two different semantics.
  In this specific setting, a more direct approach seems to be possible, one that constructs a single 4-ary relation on the fly during the proof.

  \paragraph{Our Approach}
  In this paper, we build such a proof technique for contract satisfaction.
  Specifically, we first observe that contract satisfaction is an instance of what we call \emph{relative trace equality}, i.e., properties of the form  ``trace equality implies trace equality''. 
  We then propose \emph{relative bisimulation} as a sound and complete technique to prove relative trace equality.
  Building on this definition, we present a deductive system for relative bisimulation proofs.
  Contrary to existing approaches, our system does not require providing simulation relations up-front.
  Instead, it takes advantage of recent advances in the field of coinductive proofs to construct invariants incrementally during the proof.
  To deal with the asynchronicity of the problem, our system supports non-lockstep reasoning via natural alternations of coinductive and inductive reasoning phases.

  We formalized our deductive system in Rocq and formally proved it sound and complete for relative trace equality.
  As a case study, we used our system to validate two standard contracts
  against two simple hardware models: an ``always-mispredict'' contract to reason about speculative execution of branches, and a ``sequential'' contract for out-of-order execution.
  Finally, we showed that our system supports several so-called \emph{up-to techniques} that exploit symmetry and transitivity properties that can simplify proofs.

  \paragraph{Structure of the Paper}

  The rest of the paper is structured as follows.
  Section 2 gives an overview of microarchitectural side channels and hardware-software contracts.
  Section 3 introduces the notion of proof by relative bisimulation, together with an associated deductive system.
  Section 4 demonstrates how to use our deductive system to validate two different hardware-software contracts.
  Finally, Section 5 presents several up-to techniques that can be soundly integrated into our deductive system.

  \section{Hardware-Software Contracts}
  \label{sec:HW-SW-contracts}

  Before introducing our proof technique, we recall the basics of microarchitectural side channels, noninterference, and hardware-software contracts.

  \subsection{Microarchitectural Side Channels}
  Computations performed on CPUs leave traces in the microarchitecture, e.g., in caches, buffers, and in the internal state of branch predictors. When several parties share microarchitectural components, one party can observe the microarchitectural changes generated by the others by using a so-called side channel.
  If these changes depend on secret data, this constitute a critical information leakage.
  In a cache-based side-channel attack, for example, the attacker can learn which addresses have been accessed by the victim (just the address, not the value stored at it). If the addresses depend on secrets, the attacker might be able to deduce the value of secret data.

  \subsection{Noninterference and Contract Satisfaction}

  \emph{Noninterference} establishes that the observations an attacker can make are independent of the secrets.
  Let $P$ be a program and $(\sigma, \mu)$ be a CPU state composed of an architectural part $\sigma$, and a microarchitectural part $\mu$. We write $\sigma =_\texttt{pub} \sigma'$ to denote that two architectural states agree on their public values (but might disagree on their secrets). Let furthermore $\hwobs{P}{\sigma}{\mu}$ denote the microarchitectural observations an attacker can make through a cache-based side channel when $P$ is executed with initial architectural state $\sigma$ and microarchitectural state $\mu$.
  $\hwobs{P}{\sigma}{\mu}$ is a (potentially infinite) trace typically defined in terms of 
  a labeled operational semantics of the hardware.
  The program $P$ is noninterferent on the modeled CPU if
  $$
    \forall \sigma, \sigma', \mu \ldot \sigma =_\texttt{pub} \sigma' \Rightarrow \hwobs{P}{\sigma}{\mu} = \hwobs{P}{\sigma'}{\mu} 
  $$
  To establish that a program is noninterferent, we have to reason about the microarchitectural behavior of the specific hardware it is run on.
  This requires deep knowledge of both the program and the microarchitecture.
  To split that task between hardware and software engineers, hardware-software contracts have been introduced as an abstraction layer between the two~\cite{hwswcontracts}.

  Contracts are formal models of the expected side-channel leakage of a CPU. They are defined in terms of the instruction set architecture (ISA) and thus abstract away the behavior of the microarchitecture. We write $\cobs{P}{\sigma}$ for the leakage predicted by a contract $\cstepstyle{C}$ when running $P$ on $\sigma$. As for the hardware, $\cobs{P}{\sigma}$ is a trace defined by a labeled operational semantics. Contracts let us split the noninterference property into two parts. The first property establishes that a hardware satisfies a contract.
   \begin{definition}[Contract Satisfaction~\cite{hwswcontracts, revizor}]
    \label{def:hwswcontracts}
    A hardware semantics $\hwsem$ satisfies a contract $\csem$ if
    $$
      \forall P, \sigma, \sigma', \mu \ldot \cobs{P}{\sigma} = \cobs{P}{\sigma'} \Rightarrow \hwobs{P}{\sigma}{\mu} = \hwobs{P}{\sigma'}{\mu}
    $$
  \end{definition}

  \noindent The second property establishes that a program is noninterferent w.r.t. a specific contract.
  \begin{definition}[\cite{hwswcontracts}]
    \label{def:contractnoninf}
    A program $P$ is noninterferent with respect to contract $\csem$ if
    $$
      \forall \sigma, \sigma' \ldot \sigma =_\texttt{pub} \sigma' \Rightarrow \cobs{P}{\sigma} = \cobs{P}{\sigma'}
    $$
  \end{definition}
  \noindent With these definitions, if a CPU with semantics $\hwsem$ satisfies a contract $\csem$, and $P$ is noninterferent w.r.t. $\csem$, then $P$ is also noninterferent on $\hwsem$. 
  The other direction does not generally hold, since $\csem$ usually over-approximates the leakage of $\hwsem$.
  This split into two properties makes reasoning about noninterference a lot more convenient: Definition~\ref{def:hwswcontracts} lets us reason about the leakage of a CPU without worrying about concrete programs and their secrets. Definition~\ref{def:contractnoninf} lets us reason about the leakages of a program without worrying about the concrete hardware semantics and its microarchitecture.
  In this paper, we focus on the contract satisfaction property given in Definition~\ref{def:hwswcontracts}.
  We define a deductive system to prove that a given hardware semantics satisfies a contract.

  \subsection{Example: Always-Mispredict Contract}
  \label{subsec:sec2-always-mispredict}
  The goal of developing a deductive system for contract satisfaction proofs was inspired by the dense pen-and-paper proofs that accompany the paper that first introduced this form of contracts~\cite{hwswcontracts}.
  The contracts developed in this paper use the always-mispredict feature for modeling Spectre-style branch misprediction and speculative execution~\cite{spectector}.
  We later use this feature to demonstrate the usage of our deductive system, so we introduce the intuition behind it.
  
  \paragraph{Branch Misprediction}
  Modern processors have a branch predictor, which predicts the outcome of branching decisions.
  If the result of evaluating the branch condition is not available yet, the CPU will speculatively execute the predicted branch and potentially roll back to the correct branch when the branch condition has finished evaluating.
  The architectural effects of speculative execution are fully rolled back, meaning that speculation does not impact functional correctness. Microarchitectural effects, however, cannot be rolled back.
  As a result, inadvertently executed branches might leak secrets into the microarchitecture.

  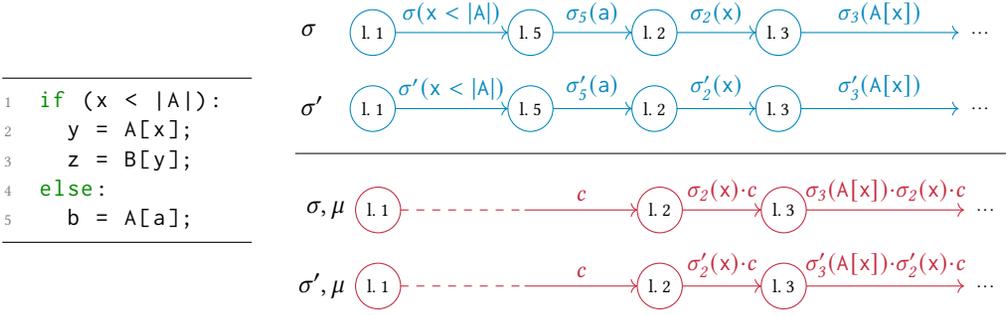
\begin{figure}[t]
		\small   
    \hspace{-5mm}
    \begin{subfigure}[b]{0.25\textwidth}
\begin{lstlisting}[style=Cstyle, basicstyle=\ttfamily\small]
if (x < |A|):
  y = A[x];
  z = B[y];
else:
  b = A[a];
\end{lstlisting}
\vspace{8mm}
    \label{subfig:spectre}
    \end{subfigure}%
    \hspace{5mm}
    \begin{subfigure}[b]{0.68\textwidth}
      \parbox{\textwidth}{%
      \centering
      \begin{tikzpicture}[initial text=, 
				->,
				node distance=1.65cm,
        state/.style = {circle, minimum size=6mm,
					inner sep=0pt, outer sep=0pt, font=\normalsize},
				sstate/.style = {state, draw=stepcol},
        hstate/.style = {state, draw=hcol}
				]

        \node[sstate] (s10) {\scriptsize l. 1};
        \node[state, left=0.1cm of s10] {$\sigma\phantom{,\mu}$};
				\node[sstate, right = 1.5cm of s10] (s11) {\scriptsize l. 5};
				\node[sstate, right of = s11] (s12) {\scriptsize l. 2};
				\node[sstate, right of = s12] (s13) {\scriptsize l. 3};
				\node[state, right = 2.1cm of s13] (s14) {\scriptsize \ldots};

        \node[sstate, below=0.4cm of s10] (s20) {\scriptsize l. 1};
        \node[state, left=0.0cm of s20] {$\sigma'\phantom{,\mu}$};
				\node[sstate, right = 1.5cm of s20] (s21) {\scriptsize l. 5};
				\node[sstate, right of = s21] (s22) {\scriptsize l. 2};
				\node[sstate, right of = s22] (s23) {\scriptsize l. 3};
				\node[state, right = 2.1cm of s23] (s24) {\scriptsize \ldots};

				\draw[stepcol] (s10) edge[above] node[sloped,pos=0.5]{\cstepstyle{\sigma(\texttt{x} < |\texttt{A}|)}} (s11)
        (s11) edge[above] node[sloped,pos=0.5]{\cstepstyle{\sigma_5(\texttt{a})}} (s12)
        (s12) edge[above] node[sloped,pos=0.5]{\cstepstyle{\sigma_2(\texttt{x})}} (s13)
        (s13) edge[above] node[sloped,pos=0.5]{\cstepstyle{\sigma_3(\texttt{A}[\texttt{x}])}} (s14);
        \draw[stepcol] (s20) edge[above] node[sloped,pos=0.5]{\cstepstyle{\sigma'(\texttt{x} < |\texttt{A}|)}} (s21)
        (s21) edge[above] node[sloped,pos=0.5]{\cstepstyle{\sigma'_5(\texttt{a})}} (s22)
        (s22) edge[above] node[sloped,pos=0.5]{\cstepstyle{\sigma'_2(\texttt{x})}} (s23)
        (s23) edge[above] node[sloped,pos=0.5]{\cstepstyle{\sigma'_3(\texttt{A}[\texttt{x}])}} (s24);
		\end{tikzpicture}
    }

    \vspace{.2cm}

    \rule{\textwidth}{.2pt}

    \vspace{.2cm}

    \parbox{\textwidth}{%
      \centering
			\begin{tikzpicture}[initial text=, 
				->,
				node distance=1.65cm,
        state/.style = {circle, minimum size=6mm,
					inner sep=0pt, outer sep=0pt, font=\normalsize},
				sstate/.style = {state, draw=stepcol},
        hstate/.style = {state, draw=hcol}
				]

        \node[hstate] (h20) {\scriptsize l. 1};
        \node[state, left=0.1cm of h20] {$\sigma, \mu$};

        \node[hstate, below=0.4cm of h20] (h10) {\scriptsize l. 1};
        \node[state, left=0.1cm of h10] {$\sigma', \mu$};

        \node[state, right of = h20] (h21) {};
        \node[hstate, right = 1.5cm of h21] (h22) {\scriptsize l. 2};
        \node[hstate, right of = h22] (h23) {\scriptsize l. 3};
        \node[state, right = 2.1cm of h23] (h24) {\scriptsize \ldots};

        \node[state, right of = h10] (h11) {};
        \node[hstate, right = 1.5cm of h11] (h12) {\scriptsize l. 2};
        \node[hstate, right of = h12] (h13) {\scriptsize l. 3};
        \node[state, right = 2.1cm of h13] (h14) {\scriptsize \ldots};

        \draw[hcol, -, dashed] (h10) -- (h11.east);
        \draw[hcol, -, dashed] (h20) -- (h21.east);

        \draw[hcol]
          (h11) edge[above] node[sloped,pos=0.5]{\hwstepstyle{c}} (h12)
          (h12) edge[above] node[sloped,pos=0.5]{\hwstepstyle{\sigma'_2(\texttt{x}){\cdot}c}} (h13)
          (h13) edge[above] node[sloped,pos=0.5]{\hwstepstyle{\sigma'_3(\texttt{A}[\texttt{x}]){\cdot}\sigma_2'(\texttt{x}){\cdot}c}} (h14);

        \draw[hcol]
          (h21) edge[above] node[sloped,pos=0.5]{\hwstepstyle{c}} (h22)
          (h22) edge[above] node[sloped,pos=0.5]{\hwstepstyle{\sigma_2(\texttt{x}){\cdot}c}} (h23)
          (h23) edge[above] node[sloped,pos=0.5]{\hwstepstyle{\sigma_3(\texttt{A}[\texttt{x}]){\cdot}\sigma_2(\texttt{x}){\cdot}c}} (h24);
		  \end{tikzpicture}
    }
    \label{subfig:traces}
    \end{subfigure}
		\caption{Program vulnerable to Spectre and potential contract (above) and hardware traces (below).
    The nodes indicate the line being executed next. The labels correspond to the leakage.
    $\sigma$ and $\sigma'$ are two initial architectural states, $\mu$ is the initial microarchitectural state, and $c$ is the initial cache.
    $\sigma_2(')$, $\sigma_3(')$, and $\sigma_5(')$ are the states after executing lines \texttt{2}, \texttt{3}, and \texttt{5}, respectively.
    We assume that the branch condition evaluates to true and that the hardware predicts the correct branch in state $\mu$.
    }
    \Description{An example of program vulnerable to Spectre}
    \label{fig:spectre}
  \end{figure}

  The program on the left in \Cref{fig:spectre} is a typical example demonstrating Spectre-style attacks.  
  The program first checks if \lstinline{x} is within the bounds of an array \lstinline|A|. If yes, it loads from \lstinline|A| at index \lstinline|x|. With branch misprediction, it might happen that the \texttt{if}-branch is executed even if \lstinline|x| is not within the bounds of \lstinline|A|. Then, a value from somewhere in the memory is loaded and stored in \lstinline|y|. This value, potentially a secret, is subsequently used as an address and thus leaked into the cache.

  \paragraph{Always-Mispredict Technique}
  The always-mispredict technique~\cite{spectector} (used in the CT-SPEC contract in~\cite{hwswcontracts}) abstracts away the microarchitectural branch predictor and instead always executes both branches.
  \Cref{fig:spectre} depicts exemplary traces generated by a contract using the always-mispredict technique (upper box, in blue) and a hardware semantics that speculates on branching decisions (lower box, in red).
  The four traces are obtained from Definition~\ref{def:hwswcontracts} when considering the program given on the left and two initial architectural states $\sigma$ and $\sigma'$ and initial microarchitectural state $\mu$.
  The leakage labels of the contract traces follow the constant-time principle, exposing the (correct) result of branching conditions and the addresses of memory accesses.
  The hardware traces expose the current cache.
  For this example, we assume that the branch condition evaluates to true in both $\sigma$ and $\sigma'$.
  Still, the contract also executes the wrong branch for a fixed number of steps to simulate a potential misprediction.
  The hardware traces, on the other hand, do not mispredict in microarchitectural state $\mu$.
  Importantly, the effects of the contract misprediction on the architectural
  states are reverted after executing $\scriptsize l. 5$ on the wrong branch, and both the contract and the hardware
  traces are synchronized again at line $\scriptsize l. 2$ in the same states $\sigma_2$ and $\sigma'_2$.
  From this point on, the leakage associated with each of the four executions depends on the evaluation of \lstinline|a|, \lstinline|x|, and \lstinline|A[x]| in the successive states.
  Since the states coincide on the contract and the hardware sides,
  if the contract traces agree on the value of \lstinline|x| and \lstinline|A[x]|,
  the hardware traces are also equal.
  Thus, Definition~\ref{def:hwswcontracts} would be satisfied.

  \paragraph{Correctness of the Always-Mispredict Technique}
  Intuitively, any hardware speculating solely on branching outcomes should indeed satisfy the contract, as the always-mispredict contract only exposes ``more'' information if the hardware does not mispredict.
  Proving contract satisfaction according to Definition~\ref{def:hwswcontracts} is challenging, though.
  We need to keep track of four different traces. Two traces of the same type always progress in lockstep, but operate on different inputs $\sigma, \sigma'$. The contract and hardware traces, on the other hand, may proceed at different speeds and produce different observation labels. Especially the former constitutes a challenge in formal proofs, as we cannot just perform a simple induction on the length of one of the contract traces.
  The setting calls for a simulation-style proof, but this requires finding suitable invariants across the four traces.
  In the following, we present a proof system that addresses these challenges by combining coinductive and inductive reasoning. We will come back to the always-mispredict technique in \Cref{sec:examples}, where we will give a formal semantics for a small ISA and prove the technique correct.

  \section{Proving Contract Satisfaction by Relative Bisimulation}
  \label{sec:relative-bisim}

  In this section, we present a new technique to formally prove contract satisfaction as defined in Definition~\ref{def:hwswcontracts}.
  We first emphasize that contract satisfaction is an instance of the more general problem of proving statements of the form ``trace equality implies trace equality''. We refer to this problem as \emph{relative trace equality}.
  We then show that relative trace equality is fully
  characterized by a 4-ary variant of bisimilarity which we
  call \emph{relative bisimilarity}.
  Building on relative bisimilarity, we develop a sound-and-complete deductive system to prove relative trace equality.
  This system offers simple reasoning principles for \emph{establishing} equality of traces between two states while \emph{exploiting} the assumption that two other states are trace-equal.

  \subsection{The Challenges of Trace-Based Reasoning}
  \label{subsec:mainidea}

  If we omit assumptions on initial states for a moment, proving contract satisfaction essentially boils down 
  to proving statements of the form $\ceval{P}(s_1) = \ceval{P}(s_2) \implies \hweval{P}(h_1) = \hweval{P}(h_2)$,
  where $s_1, s_2$ are two contract states and $h_1, h_2$ are two hardware states.
  Executing $P$ from the contract states $s_1, s_2$ and the hardware states $h_1, h_2$ leads to four executions:
  \begin{align*}
    s_1 \lstep{\mi{cobs}_1} s'_1 \lstep{\mi{cobs}'_1} s''_1 \lstep{\mi{cobs}''_1} \ldots
    \qquad
    h_1 \hwstep{\mi{hobs}_1} h'_1 \hwstep{\mi{hobs}'_1} h''_1 \hwstep{\mi{hobs}''_1} \ldots
    \\
    s_2 \lstep{\mi{cobs}_2} s'_2 \lstep{\mi{cobs}'_2} s''_2 \lstep{\mi{cobs}''_2} \ldots
    \qquad
    h_2 \hwstep{\mi{hobs}_2} h'_2 \hwstep{\mi{hobs}'_2} h''_2 \hwstep{\mi{hobs}''_2} \ldots
  \end{align*}%
  where $~\lstep{}~$ and $\hwstep{}$ are the small-step operational semantics defining the contract and the hardware model.
  Hardware labels $\mi{hobs}$ are used to model
  \emph{microarchitectural} information an attacker can observe during executions.
  Contract labels $\mi{cobs}$ are \emph{architectural} approximations of what is
  leaked to attackers.
  The four sequences of observations define the following \emph{traces}: \begin{align*}
    \ceval{P}(s_1) = \mi{cobs}_1~\mi{cobs}'_1~\mi{cobs}''_1\ldots
    \qquad
    \hweval{P}(h_1) = \mi{hobs}_1~\mi{hobs}'_1~\mi{hobs}''_1\ldots\\
    \ceval{P}(s_2) = \mi{cobs}_2~\mi{cobs}'_2~\mi{cobs}''_2\ldots
    \qquad
    \hweval{P}(h_2) = \mi{hobs}_2~\mi{hobs}'_2~\mi{hobs}''_2\ldots
  \end{align*}
  We want to prove that, assuming the two contract traces are equal, the hardware traces are also equal (i.e., the contract correctly over-approximates the leakage):
  \[
    \mi{cobs}_1~\mi{cobs}'_1~\mi{cobs}''_1\ldots = \mi{cobs}_2~\mi{cobs}'_2~\mi{cobs}''_2\ldots
    \implies
    \mi{hobs}_1~\mi{hobs}'_1~\mi{hobs}''_1\ldots = \mi{hobs}_2~\mi{hobs}'_2~\mi{hobs}''_2\ldots
  \]

  If contract traces were guaranteed to be finite and of equal length, we could reason by induction on the length of the two contract traces.
  Unfortunately, depending on the program $P$ and the states $s_1, s_2$,
  the executions might be of different lengths.
  They are not even guaranteed to both be finite as the same program $P$ might diverge from a state $s_1$ and terminate from another state $s_2$.
  If both traces are of infinite length, we cannot employ induction anyway as infinite traces are not inductive objects. 
  The exact same problems arise for hardware traces.

  \subsection{Relative Trace Equality Between Transition Systems}

  We propose a general solution to the problem of proving relative trace equality, i.e., statements of the form ``trace equality implies trace equality``.
  We do not commit to a specific notion of contract or hardware semantics.
  Instead, we treat both the contract and the hardware model as arbitrary \emph{deterministic} and \emph{non-terminating} state transition systems equipped with an infinite trace semantics.

  \begin{definition}[Transition Systems]
    A \emph{transition system} $\transsys$ is a tuple $(S, \Sigma, \trans, \leak)$, where $S$ is a set of states, $\Sigma$ is a set of observations,
    $\trans : S \to S$ is a transition function, and $\leak : S \to \Sigma$ is
    a \emph{leakage function}.
  \end{definition}

  \begin{definition}[Infinite Traces]
    Given a transition system $\transsys = (S, \Sigma, \trans, \leak)$,
    and a state $s \in S$, we
    denote by $\den{s}_\transsys \in \Sigma^\omega$ the (unique) \emph{infinite leakage trace} starting in $s$ \[
      \den{s}_\transsys \triangleq \leak(s)~\leak(\trans(s))~\leak(\trans^2(s))\ldots
    \]
  \end{definition}

  We note that assuming determinism is in line with previous work on hardware-software contracts~\cite{hwswcontracts, speculationatfault,Leave,mohr2024synthesizing,mohr2025synthesis}.
  We discuss this assumption and how to potentially lift it in \Cref{sec:discussion}.
  We also note that considering only non-terminating systems (and thus only infinite traces) does not result in any loss of generality, as terminating programs can
  be straightforwardly modeled in the infinite-trace setting by making terminating states self-loop forever.
  In fact, this approach offers more flexibility: it enables controlling whether non-termination is observable or not simply by choosing the labeling of terminal self-loops adequately.
  We omit the details here, which are quite standard, but we refer the curious readers to our Rocq development for a full formalization of the encoding(s) of termination in the infinite-trace setting (including proofs of correctness).

  We note that contract satisfaction (Definition~\ref{def:hwswcontracts}) can easily be stated in our transition system formalism.
  Indeed, let $P$ be a program expressed in some ISA, and let $\mathcal{T}^P_\sw = (S^P_\sw, \Sigma^P_\sw, \trans^P_\sw, \leak^P_\sw)$ and $\mathcal{T}^P_\hw = (S^P_\hw, \Sigma^P_\hw, \trans^P_\hw, \leak^P_\hw)$ be two transition systems representing a machine executing $P$ by following some contract semantics and hardware semantics, respectively (e.g., 
  if $s ~\lstep{\mi{cobs}}~ s'$, then $\trans^P_\sw(s) = s'$ and $\leak^P_\sw(s) = \mi{cobs}$, etc).
  Then, proving contract satisfaction means proving the following implication \[
    \forall P, \sigma, \sigma', \mu. \ \den{\sigma}_{\mathcal{T}^P_\sw} = \den{\sigma'}_{\mathcal{T}^P_\sw} \implies \den{(\sigma, \mu)}_{\mathcal{T}^P_\hw} = \den{(\sigma', \mu)}_{\mathcal{T}^P_\hw}
  \]

  \subsection{A State-Based Proof Technique: Relative Bisimulation}

  For the remainder of the section, let two transition systems $\sw= (S_\sw, \Sigma_\sw, \trans_\sw, \leak_\sw)$ (playing the role of a contract) and $\hw = (S_\hw, \Sigma_\hw, \trans_\hw, \leak_\hw)$ (playing the role of a hardware model) be given, and let $s_1, s_2 \in S_\sw$ and $h_1, h_2 \in S_\hw$ be four states.
  In the remainder of the paper, we often refer to $s_1$ and $s_2$ as the ``contract states'', and $h_1$ and $h_2$ as the ``hardware states'' for convenience.
  We want to obtain a technique to prove statements of the form \[
    \den{s_1}_\sw = \den{s_2}_\sw \implies \den{h_1}_\hw = \den{h_2}_\hw
  \]

  Ideally, to avoid the issues mentioned in \Cref{subsec:mainidea}, the technique should be state-based rather than trace-based.
  In the remainder of this subsection, we develop such a state-based proof technique as an extension of the standard bisimulation method for proving trace equality.
  
  \paragraph*{Fixpoints and (Co)Induction}

  Before introducing our proof technique, we quickly re-iterate over the necessary basics of fixpoints and their connection to proofs by (co)induction.
  Let $U$ be a set, and let $F : \mathcal{P}(U) \to \mathcal{P}(U)$ be a function from subsets of $U$ to subsets of $U$.
  A subset $X \subseteq U$ is called a \emph{fixpoint} of $F$ if $X = F(X)$. A famous result due to Tarski \cite{tarski} guarantees that if $F$ is \emph{monotone} (i.e., $\forall X, Y. \ X \subseteq Y \Rightarrow F(X) \subseteq F(Y)$), then it has a unique \emph{greatest fixpoint} (noted $\nu F$) and a unique \emph{least fixpoint} (noted $\mu F$).
  Further, least fixpoints come associated with a complete proof principle to show inclusions of the form $\mu F \subseteq X$ (induction). Dually, greatest fixpoints enjoy a complete proof principle for inclusions of the form $X \subseteq \nu F$ (coinduction).
  \begin{theorem}[Fixpoint Induction and Coinduction]
    \begin{align*}
      \pcomment{Induction} \mu F \subseteq X &\iff \exists I. \ I \subseteq X \wedge F(I) \subseteq I\\
      \pcomment{Coinduction} X \subseteq \nu F &\iff \exists I. \ X \subseteq I \wedge I \subseteq F(I)
    \end{align*}
  \end{theorem}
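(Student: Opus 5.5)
We prove the two equivalences separately. Both rely only on Tarski's characterisation of the extremal fixpoints of a monotone $F$ on the complete lattice $(\mathcal{P}(U), \subseteq)$: the least fixpoint is $\mu F = \bigcap\{I \mid F(I) \subseteq I\}$, and the greatest fixpoint is $\nu F = \bigcup\{I \mid I \subseteq F(I)\}$. Since the theorem is stated after Tarski's result, we may use these formulas. We also check that they really describe fixpoints; this check is the only place where monotonicity is used.

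\emph{Induction.} For ($\Leftarrow$), take $I$ with $I \subseteq X$ and $F(I) \subseteq I$. Then $I$ is a pre-fixpoint, so by the intersection formula $\mu F \subseteq I \subseteq X$. For ($\Rightarrow$), choose $I := \mu F$. Then $I \subseteq X$ holds by hypothesis, and $F(\mu F) \subseteq \mu F$ holds because $\mu F$ is a fixpoint.

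\emph{Coinduction.} This case is exactly dual. For ($\Leftarrow$), take $I$ with $X \subseteq I$ and $I \subseteq F(I)$. Then $I$ is a post-fixpoint, so by the union formula $X \subseteq I \subseteq \nu F$. For ($\Rightarrow$), choose $I := \nu F$. Then $X \subseteq \nu F$ holds by hypothesis, and $\nu F \subseteq F(\nu F)$ holds since $\nu F$ is a fixpoint.

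The one step with real content is showing that the union formula defines a fixpoint; the intersection case is the same argument with all inclusions reversed. Let $G = \bigcup\{I \mid I \subseteq F(I)\}$.
\begin{itemize}
\item First we show $G \subseteq F(G)$. Let $I$ be any post-fixpoint. Since $I \subseteq G$, monotonicity gives $I \subseteq F(I) \subseteq F(G)$. Taking the union over all such $I$ yields $G \subseteq F(G)$.
\item Next we show $F(G) \subseteq G$. Monotonicity applied to $G \subseteq F(G)$ gives $F(G) \subseteq F(F(G))$. So $F(G)$ is itself a post-fixpoint, and hence it is one of the sets in the union defining $G$. Therefore $F(G) \subseteq G$.
\end{itemize}
Together these give $G = F(G)$.

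Moreover, every fixpoint of $F$ is a post-fixpoint and therefore contained in $G$, so $G$ is the greatest fixpoint. Once this is established, each equivalence above is a two-line argument.
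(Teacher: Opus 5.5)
Your proof is correct, and it follows essentially the same route as the paper: the paper gives no separate argument and simply appeals to Tarski's theorem. Your derivation from the Knaster--Tarski characterisations $\mu F = \bigcap\{I \mid F(I) \subseteq I\}$ and $\nu F = \bigcup\{I \mid I \subseteq F(I)\}$ is the standard argument behind that appeal. Your explicit check that the union of post-fixpoints is a fixpoint (with the dual check for pre-fixpoints) also fills in the one detail the paper's wording of Tarski's theorem omits, namely these explicit formulas.
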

  
  Coinduction and induction can be seen as methods to prove inclusions by ``exhibiting an invariant". For example, coinduction can be informally spelled out as follows:
  to prove that elements of $X$ satisfy a property defined as a greatest fixpoint (i.e., $X \subseteq \nu F$), it suffices to find an ``invariant'' $I$ which holds initially (i.e., $X \subseteq I$), and which is preserved over one $F$-step (i.e., $I \subseteq F(I)$). $I$ is usually called a \emph{coinduction hypothesis}.

  \paragraph*{Bisimulation Proofs}

  Bisimulation is a standard state-based technique to prove that two states of a transition system generate the same traces (e.g., $\den{h_1}_\hw = \den{h_2}_\hw$).
  Intuitively, given a transition system $\transsys = (S, \Sigma, \trans, \leak)$, two states $s_1, s_2 \in S$ are said to be \emph{bisimilar} if they have the same immediate observations (i.e., $\leak(s_1) = \leak(s_2)$) and their successors are again bisimilar.
  More formally, the bisimilarity relation $\texttt{bisim} \subseteq S \times S$ is defined as a greatest fixpoint.

   \begin{definition}[Bisimilarity]
    $\texttt{bisim} \triangleq \nu\texttt{bisimF}$ where
    \begin{align*}
      \texttt{bisimF}(R) &\triangleq \{ \ (s_1, s_2) \mid \leak(s_1) = \leak(s_2) \ \wedge (\trans(s_1), \trans(s_2)) \in R \ \}
    \end{align*}
  \end{definition}

  An important result is that bisimilarity is equivalent to trace equality.

  \begin{theorem}[Bisimilarity is Trace Equality]
    \label{thm:bisim-sound}
    $(s_1, s_2) \in \texttt{bisim} \iff \den{s_1}_\transsys = \den{s_2}_\transsys$
  \end{theorem}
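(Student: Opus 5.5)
We prove the two directions separately, both by the fixpoint principles of the preceding theorem. The key observation used throughout is that traces unfold: for every state $s$ we have $\den{s}_\transsys = \leak(s) \cdot \den{\trans(s)}_\transsys$, and two infinite sequences are equal iff their heads agree and their tails are equal. Equivalently, $\den{s_1}_\transsys = \den{s_2}_\transsys$ iff $\leak(\trans^n(s_1)) = \leak(\trans^n(s_2))$ for all $n \in \Nat$, since the $n$-th letter of $\den{s}_\transsys$ is $\leak(\trans^n(s))$ by definition.

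\emph{($\Leftarrow$)} We apply coinduction with the invariant $I \triangleq \{(s_1,s_2) \mid \den{s_1}_\transsys = \den{s_2}_\transsys\}$. The pair we start from lies in $I$ by assumption, so it suffices to show $I \subseteq \texttt{bisimF}(I)$. Given $(s_1,s_2)\in I$, compare the heads of the two equal traces to get $\leak(s_1) = \leak(s_2)$. Compare their tails to get $\den{\trans(s_1)}_\transsys = \den{\trans(s_2)}_\transsys$, i.e.\ $(\trans(s_1),\trans(s_2)) \in I$. By the coinduction principle, $I \subseteq \nu\texttt{bisimF} = \texttt{bisim}$.

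\emph{($\Rightarrow$)} We show by induction on $n$ the statement: for all $(s_1,s_2)\in\texttt{bisim}$, both $\leak(\trans^n(s_1)) = \leak(\trans^n(s_2))$ and $(\trans^n(s_1),\trans^n(s_2))\in \texttt{bisim}$. We use that $\texttt{bisim}$ is a fixpoint, so $\texttt{bisim} = \texttt{bisimF}(\texttt{bisim})$ (Tarski; $\texttt{bisimF}$ is clearly monotone). For $n=0$, the second conjunct is trivial and the first follows by unfolding once. For the step $n \to n+1$, unfolding gives $(\trans(s_1),\trans(s_2))\in\texttt{bisim}$, and we apply the induction hypothesis to this pair, noting $\trans^{n+1}(s) = \trans^n(\trans(s))$. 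The resulting pointwise equality of all letters yields $\den{s_1}_\transsys = \den{s_2}_\transsys$.

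The only delicate point is the formal treatment of equality of infinite traces, i.e.\ whether $\Sigma^\omega$ is read as functions $\Nat\to\Sigma$ (extensional equality, as above) or as coinductive streams. In a proof assistant like Rocq the latter needs stream bisimilarity plus an extensionality axiom, or stating the theorem with stream bisimilarity directly. On paper, reading $\Sigma^\omega$ as $\Nat\to\Sigma$ makes both the head/tail decomposition and the index-wise argument immediate, so no step is expected to be a real obstacle.
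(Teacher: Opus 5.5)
Your proof is correct. It is the standard argument: coinduction with the invariant $\{(s_1,s_2) \mid \den{s_1}_\transsys = \den{s_2}_\transsys\}$ for one direction, and induction on the index $n$, quantified over all pairs in \texttt{bisim}, for the other; the paper does not prove the theorem but cites it as a well-known fact (e.g., in the soundness half of Theorem~\ref{lem:sound}).
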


  Further, since \texttt{bisim} is defined as a greatest fixpoint, one can prove bisimilarity by coinduction.
  In particular, to prove that two states $s_1$ and $s_2$ are bisimilar, it suffices to find a relation $R$
  (usually called a \emph{bisimulation}) which contains $(s_1, s_2)$ and satisfies $R \subseteq \texttt{bisimF}(R)$.
  This process can be thought of as proving trace equality
  by exhibiting a suitable \emph{relational invariant}.

  \begin{theorem}[Proofs by Bisimulation]
    $ \den{s_1}_\transsys = \den{s_2}_\transsys \iff \exists R. (s_1, s_2) \in R \wedge R \subseteq \texttt{bisimF}(R)$
  \end{theorem}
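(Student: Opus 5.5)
The plan is to obtain the statement by chaining \Cref{thm:bisim-sound} with the coinduction half of the Fixpoint Induction and Coinduction theorem, instantiated at $F = \texttt{bisimF}$. Two preliminary facts are needed. First, $\texttt{bisimF}$ is monotone: if $R \subseteq R'$ and $(s_1,s_2) \in \texttt{bisimF}(R)$, then $(\trans(s_1),\trans(s_2)) \in R \subseteq R'$ and the leakage condition does not depend on the argument, so $(s_1,s_2) \in \texttt{bisimF}(R')$. By Tarski, $\nu\texttt{bisimF} = \texttt{bisim}$ is therefore well defined. Second, we instantiate the coinduction principle with the singleton $X = \{(s_1,s_2)\}$. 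This gives
\[
\{(s_1,s_2)\} \subseteq \texttt{bisim} \iff \exists R.\ \{(s_1,s_2)\} \subseteq R \wedge R \subseteq \texttt{bisimF}(R).
\]

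The left-hand side is simply $(s_1,s_2) \in \texttt{bisim}$. The right-hand side is exactly the existential in the statement, since $\{(s_1,s_2)\} \subseteq R$ holds iff $(s_1,s_2) \in R$. Rewriting the left-hand side with \Cref{thm:bisim-sound} yields $\den{s_1}_\transsys = \den{s_2}_\transsys$, which completes the chain of equivalences.

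For robustness, I would also state the two directions directly in case the fixpoint theorem is not to be relied upon as a black box. The direction ($\Leftarrow$) is the Knaster--Tarski fact that every post-fixpoint is contained in the greatest fixpoint, because $\nu F = \bigcup\{I \mid I \subseteq F(I)\}$. The direction ($\Rightarrow$) follows by taking $R = \texttt{bisim}$ itself, which is a fixpoint and hence satisfies $R \subseteq \texttt{bisimF}(R)$.

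The main obstacle is not here but in \Cref{thm:bisim-sound}, which we are allowed to assume. That theorem requires showing that bisimilar states have equal leakage at every index $n$, by induction on $n$ using $\trans^n$, and conversely that $\{(s_1,s_2) \mid \den{s_1} = \den{s_2}\}$ is itself a bisimulation. Given that theorem, the present statement is a direct corollary, and the only care needed is the monotonicity check that justifies applying the coinduction principle.
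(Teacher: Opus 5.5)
Your proposal is correct and follows the same route the paper implicitly takes. The paper gives no separate proof; it obtains the statement by combining Theorem~\ref{thm:bisim-sound} with the coinduction principle instantiated at $F = \texttt{bisimF}$. Your monotonicity check and the instantiation at the singleton $\{(s_1, s_2)\}$ fill in exactly the details the paper leaves unstated.
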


  Remember that we aim to prove implications $\den{s_1}_\sw = \den{s_2}_\sw \implies \den{h_1}_\hw = \den{h_2}_\hw$.
  Omitting the assumption $\den{s_1}_\sw = \den{s_2}_\sw$, we could in principle prove $\den{h_1}_\hw = \den{h_2}_\hw$ by bisimulation.
  Unfortunately, in the context of contract satisfaction, this equality does not hold by itself: we \emph{need} to exploit the assumption $\den{s_1}_\sw = \den{s_2}_\sw$.
  Bisimulation alone is therefore not a suitable technique.

  \paragraph*{Relative Bisimilarity}
  To prove trace equality under a trace equality assumption,
  we generalize the standard binary notion of bisimilarity to a 4-ary \emph{relative bisimilarity} relation.
  Let $s_1, s_2 \in S_\sw$ and $h_1, h_2 \in S_\hw$ be four states.
  To prove that $\den{s_1}_\sw = \den{s_2}_\sw$ implies $\den{h_1}_\hw = \den{h_2}_\hw$ in a bisimulation-flavored way, the most straightforward approach would be to process the traces generated by the 4 states in lockstep, checking that \emph{local} equality of leakage on the left implies \emph{local} equality of leakage on the right.
  Following this intuition, we say that $s_1, s_2, h_1, h_2$ are \emph{lockstep relatively bisimilar} if either $\leak_\sw(s_1) \neq \leak_\sw(s_2)$, or if $\leak_\hw(h_1) = \leak_\hw(h_2)$ and $\trans_\sw(s_1),\trans_\sw(s_2), \trans_\hw(h_1), \trans_\hw(h_2)$ are again lockstep relatively bisimilar.
  Lockstep relative bisimilarity is again defined as a greatest fixpoint.

  \begin{definition}[Lockstep Relative Bisimilarity]
    $\texttt{rbisim}_\texttt{lockstep} \triangleq \nu\texttt{rbisimF}_\texttt{lockstep}$ where
    \begin{align*}
      \texttt{rbisimF}_\texttt{lockstep}(R) &\triangleq\\
      \pcomment{Leak} & \{ \ (s_1, s_2, h_1, h_2) \mid \leak_\sw(s_1) \neq \leak_\sw(s_2) \ \} \ \cup\\
      \pcomment{Step} & \{ \ (s_1, s_2, h_1, h_2) \mid \leak_\hw(h_1) = \leak_\hw(h_2) \\
      & \phantom{ \ (s_1, s_2, h_1, h_2) \mid} ~ \wedge (\trans_\sw(s_1), \trans_\sw(s_2), \trans_\hw(h_1), \trans_\hw(h_2)) \in R \ \}
    \end{align*}
  \end{definition}

  It is not difficult to see that
  lockstep relative bisimilarity is a sound approximation of relative trace equality: if four states are lockstep relatively bisimilar, they also satisfy relative trace equality.

  \jana{Change to lemma? As this is not one of our main results.}
  \begin{theorem}
    $(s_1, s_2, h_1, h_2) \in \texttt{rbisim}_\texttt{lockstep} \implies (\den{s_1}_\sw = \den{s_2}_\sw \implies \den{h_1}_\hw = \den{h_2}_\hw)$
  \end{theorem}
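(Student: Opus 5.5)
We prove the statement by coinduction on $\texttt{bisim}$ for $\hw$, using Theorem~\ref{thm:bisim-sound} to move between traces and bisimilarity. Assume $(s_1,s_2,h_1,h_2)\in\texttt{rbisim}_\texttt{lockstep}$ and $\den{s_1}_\sw=\den{s_2}_\sw$. By Theorem~\ref{thm:bisim-sound} it suffices to show $(h_1,h_2)\in\texttt{bisim}$ for $\hw$. By the coinduction principle it suffices to exhibit a relation $R\subseteq S_\hw\times S_\hw$ containing $(h_1,h_2)$ with $R\subseteq\texttt{bisimF}(R)$. We take
\[
R \triangleq \{ (h, h') \mid \exists s, s' \in S_\sw.\ (s,s',h,h')\in\texttt{rbisim}_\texttt{lockstep} \wedge \den{s}_\sw=\den{s'}_\sw \}.
\]
By assumption, $(h_1,h_2)\in R$, witnessed by $s_1,s_2$.

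Next we show that $R$ is a post-fixpoint. Let $(h,h')\in R$ with witnesses $s,s'$. Since $\texttt{rbisim}_\texttt{lockstep}$ is a fixpoint of $\texttt{rbisimF}_\texttt{lockstep}$ (Tarski), we can unfold it once, giving two cases.

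\textbf{Leak case:} $\leak_\sw(s)\neq\leak_\sw(s')$. The traces start with $\leak_\sw(s)$ and $\leak_\sw(s')$ respectively, so this contradicts $\den{s}_\sw=\den{s'}_\sw$. Hence the case is impossible.

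\textbf{Step case:} $\leak_\hw(h)=\leak_\hw(h')$ and $(\trans_\sw(s),\trans_\sw(s'),\trans_\hw(h),\trans_\hw(h'))\in\texttt{rbisim}_\texttt{lockstep}$. We still need $\den{\trans_\sw(s)}_\sw=\den{\trans_\sw(s')}_\sw$. This holds because $\den{\trans_\sw(s)}_\sw$ is the tail of $\den{s}_\sw$ by the definition of infinite traces, and equal streams have equal tails. Thus $(\trans_\hw(h),\trans_\hw(h'))\in R$, witnessed by $\trans_\sw(s),\trans_\sw(s')$, and together with the leak equality this gives $(h,h')\in\texttt{bisimF}(R)$.

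The only point needing care is the pair of trace facts: the head/tail decomposition $\den{s}_\sw=\leak_\sw(s)\cdot\den{\trans_\sw(s)}_\sw$ and the injectivity of stream cons. Both follow directly from the definition of $\den{\cdot}$ as $n\mapsto\leak(\trans^n(s))$, by comparing index $0$ and shifted indices. Alternatively, we can route these facts through Theorem~\ref{thm:bisim-sound} on the $\sw$ side by unfolding $\texttt{bisim}$ once. Beyond this, the argument is a routine coinduction, and no real obstacle is expected.
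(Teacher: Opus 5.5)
Your proof is correct and matches the route the paper takes for the soundness direction of Theorem~\ref{lem:sound}: reduce via Theorem~\ref{thm:bisim-sound} to $(h_1,h_2)\in\texttt{bisim}$, then argue by coinduction on \texttt{bisim} with a relation that carries the contract-trace equality as a side condition. The paper gives no separate proof for this lockstep statement (it calls it ``not difficult to see''), and your argument is that proof without the inner induction, which only the non-lockstep \texttt{rbisim} needs.
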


  Unfortunately the reverse implication does not hold:
  $\texttt{rbisim}_\texttt{lockstep}$
  is \emph{not} a complete characterization of relative trace equality.
  As a counterexample, consider the following scenario where $\mi{obsA}$, $\mi{obsB}$, $\mi{obsC}$, $\mi{obsD}$, $\mi{obsE}$ are 5 \emph{distinct} observations:\begin{align*}
    s_1 \xrightarrow{\mi{obsA}} s'_1 \xrightarrow{\mi{obsB}} \ldots
    \qquad
    h_1 \xrightarrow{\mi{obsD}} \ldots
    \\
    s_2 \xrightarrow{\mi{obsA}} s'_2 \xrightarrow{\mi{obsC}} \ldots
    \qquad
    h_2 \xrightarrow{\mi{obsE}} \ldots
  \end{align*}%

  Since $\mi{obsB} \ne \mi{obsC}$, the trace of $s_1$ is different from the trace of $s_2$ and relative trace equality trivially holds for $s_1, s_2, h_1, h_2$. However, the four states are \emph{not} relatively bisimilar according to $\texttt{bisim}_\texttt{lockstep}$ because neither $\leak(s_1) \ne \leak(s_2)$ nor $\leak(h_1) = \leak(h_2)$.
  Here, the problem comes from the lockstep nature of $\texttt{bisim}_\texttt{lockstep}$.
  Concretely, we would like to be allowed to "skip" steps on the left-hand side, independently of the right-hand side.
  In the previous example, this would have allowed us to fast-forward to the ``contract states'' $s'_1, s'_2$, and then exploit the inequality $\mi{obsB} \neq \mi{obsC}$ without having to prove anything about the ``hardware states'' in between.
  Such examples where we need non-lockstep
  arguments abound in practice, particularly in the context of contract satisfaction.
  To support non-lockstep arguments, a naive idea would be to slightly update the definition of $\texttt{rbisim}_\texttt{lockstep}$ to decouple ``contract steps'' and ``hardware steps''.
  For example, we could relax the definition of $\texttt{rbisim}_\texttt{lockstep}$ as follows:\begin{align*}
    \texttt{rbisim}_\texttt{relaxed} &\triangleq \nu\texttt{rbisimF}_\texttt{relaxed}\\
    \texttt{rbisimF}_\texttt{relaxed}(R) &\triangleq\\
    \pcomment{C-Leak} & \{ \ (s_1, s_2, h_1, h_2) \mid \leak_\sw(s_1) \neq \leak_\sw(s_2) \ \} \ \cup\\
    \pcomment{C-Step} & \{ \ (s_1, s_2, h_1, h_2) \mid (\trans_\sw(s_1), \trans_\sw(s_2), h_1, h_2) \in R \ \} \ \cup\\
    \pcomment{H-Step} & \{ \ (s_1, s_2, h_1, h_2) \mid \leak_\hw(h_1) = \leak_\hw(h_2) \wedge (s_1, s_2, \trans_\hw(h_1), \trans_\hw(h_2)) \in R \ \}
  \end{align*}

  The addition of the disjunct $\comment{C-Step}$ lets us skip contract execution steps, as we wanted. In particular, the following implication now holds by definition \[
    (\trans_\sw(s_1), \trans_\sw(s_2), h_1, h_2) \in \texttt{rbisim}_\texttt{relaxed} \implies (s_1, s_2, h_1, h_2) \in \texttt{rbisim}_\texttt{relaxed}
  \]

  Unfortunately, with this naive modification, $\texttt{rbisim}_\texttt{relaxed}$ is no longer sound for relative trace equality.
  The problem is that
  considering the greatest fixpoint of $\texttt{rbisimF}_\texttt{relaxed}$ gives an overly permissive notion of relative bisimilarity: it enables proving that a relation $R$ is a (relaxed) relative bisimulation via \comment{C-Step} without ever proving anything useful about relative trace equality (i.e., without providing evidences that $\den{s_1}_\sw \neq \den{s_2}_\sw$, nor evidences that $\den{h_1}_\hw = \den{h_2}_\hw$).
  In fact, any quadruple of states vacuously satisfies $\texttt{rbisim}_\texttt{relaxed}$.
  Indeed, let $\top = S_\sw \times S_\sw \times S_\hw \times S_\hw$ be the set of all quadruple of states. Using \comment{C-Step}, we can trivially prove $\top \subseteq \texttt{bisimF}_\texttt{relaxed}(\top)$. By coinduction, it follows that $\top \subseteq \texttt{rbisim}_\texttt{relaxed}$ (i.e., all quadruples of states are relatively bisimilar according to the relaxed definition).
  To achieve soundness while preserving support for non-lockstep reasoning,
  the solution is to employ an alternation of greatest and least fixpoint, as demonstrated by the following definition.

  \begin{definition}[Relative Bisimilarity]
    \label{def:rbisim}
    $\texttt{rbisim} \triangleq \nu R_1. \, \texttt{rbisimF}(R_1)$ where
    \begin{align*}
      \texttt{rbisimF}(R_1) &\triangleq \mu R_2.\\
      \pcomment{C-Leak} & \{ \ (s_1, s_2, h_1, h_2) \mid \leak_\sw(s_1) \neq \leak_\sw(s_2) \ \} \ \cup\\
      \pcomment{C-Step} & \{ \ (s_1, s_2, h_1, h_2) \mid (\trans_\sw(s_1), \trans_\sw(s_2), h_1, h_2) \in R_2 \ \} \ \cup\\
      \pcomment{H-Step} & \{ \ (s_1, s_2, h_1, h_2) \mid \leak_\hw(h_1) = \leak_\hw(h_2) \wedge (s_1, s_2, \trans_\hw(h_1), \trans_\hw(h_2)) \in R_1 \ \}
    \end{align*}
  \end{definition}

  This definition closely resembles the definition of $\texttt{rbisim}_\texttt{relaxed}$, except that contract steps are now treated \emph{inductively} instead of \emph{coinductively}.
  The resulting notion of (non-lockstep) relative bisimilarity turns out to exactly coincide with relative trace equality.

  \begin{theorem}[Relative Bisimilarity is Relative Trace Equality]
    \label{lem:sound}
    \[
      (s_1, s_2, h_1, h_2) \in \texttt{rbisim} \iff (\den{s_1}_\sw = \den{s_2}_\sw \implies \den{h_1}_\hw = \den{h_2}_\hw)
    \]
  \end{theorem}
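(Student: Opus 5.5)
The proof has two directions. For soundness ($\Rightarrow$) we argue coinductively on the conclusion, via Theorem~\ref{thm:bisim-sound}. For completeness ($\Leftarrow$) we exhibit the relation of all quadruples satisfying relative trace equality as a post-fixpoint of $\texttt{rbisimF}$, unfolding the inner least fixpoint by an induction on the position of the first contract disagreement.

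\emph{Soundness.} Let $Q \triangleq \{(h_1,h_2) \mid \exists s_1,s_2.\ (s_1,s_2,h_1,h_2)\in\texttt{rbisim} \wedge \den{s_1}_\sw=\den{s_2}_\sw\}$. By Theorem~\ref{thm:bisim-sound} and the coinduction principle, it suffices to show $Q\subseteq\texttt{bisimF}(Q)$. Fix $(s_1,s_2,h_1,h_2)\in\texttt{rbisim}$ with equal contract traces. Since $\texttt{rbisim}$ is a fixpoint, this quadruple lies in $\texttt{rbisimF}(\texttt{rbisim})$, which is a least fixpoint $\mu R_2.\,G(R_2)$. We prove, by induction on this least fixpoint, that every quadruple in it with $\den{s_1}_\sw=\den{s_2}_\sw$ has $(h_1,h_2)\in\texttt{bisimF}(Q)$. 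The three cases go as follows.
\begin{itemize}
\item \comment{C-Leak} is impossible, because equal traces force $\leak_\sw(s_1)=\leak_\sw(s_2)$.
\item In \comment{C-Step}, equal traces give $\den{\trans_\sw(s_1)}_\sw=\den{\trans_\sw(s_2)}_\sw$ (the tail of the trace), so the induction hypothesis applies to the same $h_1,h_2$.
\item In \comment{H-Step} we obtain $\leak_\hw(h_1)=\leak_\hw(h_2)$ and $(s_1,s_2,\trans_\hw(h_1),\trans_\hw(h_2))\in\texttt{rbisim}$, hence $(\trans_\hw(h_1),\trans_\hw(h_2))\in Q$ with the same witnesses $s_1,s_2$.
\end{itemize}
The key point is that the induction predicate is itself a set of quadruples, namely $\{q \mid \text{contract traces equal} \Rightarrow (h_1,h_2)\in\texttt{bisimF}(Q)\}$. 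To apply the induction principle we only need to check that $G$ of this set is contained in it, which is exactly the case analysis above.

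\emph{Completeness.} Let $I\triangleq\{(s_1,s_2,h_1,h_2)\mid \den{s_1}_\sw=\den{s_2}_\sw\Rightarrow\den{h_1}_\hw=\den{h_2}_\hw\}$. By coinduction it suffices to show $I\subseteq\texttt{rbisimF}(I)$. Take $q=(s_1,s_2,h_1,h_2)\in I$ and split on whether $\den{s_1}_\sw=\den{s_2}_\sw$, which is classical reasoning; the Rocq development presumably assumes excluded middle here.
\begin{itemize}
\item If the contract traces are equal, then the hardware traces are equal. Hence $\leak_\hw(h_1)=\leak_\hw(h_2)$, and the successor quadruple $(s_1,s_2,\trans_\hw(h_1),\trans_\hw(h_2))$ still satisfies $I$, because its hardware traces are the tails of equal traces. \comment{H-Step} then applies.
\item If the contract traces differ, there is a least $n$ with $\leak_\sw(\trans_\sw^n(s_1))\neq\leak_\sw(\trans_\sw^n(s_2))$. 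We show by induction on $n$ that for all $s_1,s_2$ whose traces first differ at index $n$, $(s_1,s_2,h_1,h_2)\in\mu R_2.\,G(R_2)$. For $n=0$, \comment{C-Leak} applies. For $n+1$, the successors first differ at index $n$, so the induction hypothesis places $(\trans_\sw(s_1),\trans_\sw(s_2),h_1,h_2)$ in the least fixpoint, and \comment{C-Step} closes the case.
\end{itemize}

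The main obstacle is the soundness direction. It requires interleaving a coinduction on hardware bisimilarity with an inner induction on the least fixpoint, while carrying the contract-trace equality through \comment{C-Step}. We handle this by choosing the existentially quantified relation $Q$ above, so that after each \comment{H-Step} the new contract witnesses are still available. Completeness only needs excluded middle and the extraction of the first differing index from $\den{s_1}_\sw\neq\den{s_2}_\sw$.
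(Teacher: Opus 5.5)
Your proof is correct and follows the paper's argument. Soundness goes by coinduction on $\texttt{bisim}$ with an inner induction on the least-fixpoint part of $\texttt{rbisim}$, where your relation $Q$ is the explicit coinduction hypothesis. Completeness splits on whether the contract traces are equal, using the H-Step disjunct in the equal case and otherwise an induction on the index of a contract-leakage disagreement (C-Leak, then C-Step). The differences are cosmetic: you fold both completeness cases into a single post-fixpoint check for the full relative-trace-equality set, and taking the \emph{least} disagreeing index is harmless but not needed.
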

  \begin{proof}
    (\textbf{Soundness} ${\implies}$).
    Since bisimilarity implies trace equality (Theorem~\ref{thm:bisim-sound}), it suffices to prove $(s_1, s_2, h_1, h_2) \in \texttt{rbisim} \implies (\den{s_1}_\sw = \den{s_2}_\sw \implies (h_1, h_2) \in \texttt{bisim})$.
    The proof proceeds by coinduction on $\texttt{bisim}$, and by induction on the
    inductive part of $\texttt{rbisim}$.
    (\textbf{Completeness} ${\impliedby}$).
    We proceed by case distinction and first suppose $\den{s_1}_\sw = \den{s_2}_\sw$.
    By assumption it follows that $\den{h_1}_\hw = \den{h_2}_\hw$.
    Proving $(s_1, s_2, h_1, h_2) \in \texttt{rbisim}$ now degenerates
    into a standard bisimulation proof between $h_1$ and $h_2$ using
    only \comment{H-Step} and ignoring $s_1$ and $s_2$.
    For the other case, let $\den{s_1}_\sw \neq \den{s_2}_\sw$, so there must
    exist some $n \in \mathbb{N}$ such that $\trans^n_\sw(s_1) \neq \trans^n_\sw(s_2)$.
    Under this assumption, we can prove $(s_1, s_2, h_1, h_2) \in \texttt{rbisim}$
    by induction on $n$ (using \comment{Leak} in the base case
    and \comment{C-Step} in the inductive case).
  \end{proof}

  As a consequence of Theorem~\ref{lem:sound}, we obtain a sound and complete invariant-based technique to prove relative trace equality.
  Indeed, since $\texttt{rbisim}$ is defined as a greatest fixpoint, we can prove relative bisimilarity (and hence relative trace equality) by coinduction: it suffices to find a relation $R$ containing the four initial states and satisfying $R \subseteq \texttt{rbisimF}(R)$.

  \begin{theorem}[Proofs by Relative Bisimulation]
    \label{thm:rbisim-proofs}
    \[
      (\den{s_1}_\sw = \den{s_2}_\sw \implies \den{h_1}_\hw = \den{h_2}_\hw) \Leftrightarrow \exists R. \, (s_1, s_2, h_1, h_2) \in R \wedge R \subseteq \texttt{rbisimF}(R)
    \]
  \end{theorem}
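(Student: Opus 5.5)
This follows by chaining Theorem~\ref{lem:sound} with the coinduction principle from the Fixpoint Induction and Coinduction theorem, instantiated at $F = \texttt{rbisimF}$ and $X = \{(s_1, s_2, h_1, h_2)\}$. Theorem~\ref{lem:sound} rewrites the left-hand side as $(s_1,s_2,h_1,h_2) \in \texttt{rbisim} = \nu\,\texttt{rbisimF}$. It therefore suffices to show
\[
  \{(s_1,s_2,h_1,h_2)\} \subseteq \nu\,\texttt{rbisimF} \iff \exists R.\ (s_1,s_2,h_1,h_2) \in R \wedge R \subseteq \texttt{rbisimF}(R),
\]
which is exactly the coinduction equivalence with $I = R$.

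The only real prerequisite is that $\texttt{rbisimF}$ is monotone, so that Tarski's theorem applies and $\nu\,\texttt{rbisimF}$ satisfies the coinduction principle. I would prove monotonicity first.

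Fix $R_1 \subseteq R_1'$. Write $G_{R_1}(R_2)$ for the union of the three clauses, so that $\texttt{rbisimF}(R_1) = \mu R_2.\, G_{R_1}(R_2)$.
\begin{itemize}
  \item $G_{R_1}$ is monotone in $R_2$, since $R_2$ occurs only positively, in \comment{C-Step}. Hence the inner least fixpoint exists.
  \item $G_{R_1}$ is also monotone in $R_1$, since $R_1$ occurs only positively, in \comment{H-Step}. So $G_{R_1}(R_2) \subseteq G_{R_1'}(R_2)$ for every $R_2$.
  \item By the induction principle it is enough to exhibit a pre-fixpoint of $G_{R_1}$ inside $\mu G_{R_1'}$. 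Take $I = \mu G_{R_1'}$. Then $G_{R_1}(I) \subseteq G_{R_1'}(I) = I$, so $\mu G_{R_1} \subseteq \mu G_{R_1'}$, i.e.\ $\texttt{rbisimF}(R_1) \subseteq \texttt{rbisimF}(R_1')$.
\end{itemize}

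With monotonicity in place, the two directions are immediate. For ($\Rightarrow$), take $R = \texttt{rbisim}$: it contains the quadruple by Theorem~\ref{lem:sound}, and $R = \texttt{rbisimF}(R)$ because it is a fixpoint. For ($\Leftarrow$), the coinduction principle gives $R \subseteq \texttt{rbisim}$, so the quadruple lies in $\texttt{rbisim}$, and the soundness direction of Theorem~\ref{lem:sound} concludes.

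The main obstacle is the nested-fixpoint monotonicity argument. Once it is done, everything else is bookkeeping. In a Rocq formalization this step would be handled by the library's monotone-function structure rather than by hand.
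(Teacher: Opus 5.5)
Your proposal is correct and follows the same route as the paper, which also obtains this theorem directly from Theorem~\ref{lem:sound} combined with the coinduction principle for the greatest fixpoint $\nu\,\texttt{rbisimF}$. The paper leaves the monotonicity of $\texttt{rbisimF}$ through its inner least fixpoint implicit; you verify it explicitly via the induction principle, and that argument is sound.
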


  \paragraph*{Weak Relative Bisimilarity}

  Our final definition of relative bisimilarity (\texttt{rbisim}) enables a form of
  non-lockstep reasoning where ``contract steps'' can be decoupled from ``hardware steps''.
  However, the two contract states are progressing in locksteps, and so are the two hardware states.
  This is required in our setting, because we work with a strong notion of trace equality where equality between observations must hold at each step.
  However, it is sometimes useful to consider a relaxed setting where some of the computation steps emit so-called \emph{tau events} (noted $\tau$), indicating that no observation is produced.
  In this case, traces are only required 
  to be equal up to removal of $\tau$'s (usually called weak trace equality), and non-lockstep reasoning between the two contract states (or the two hardware states) becomes a particularly useful feature to skip silent steps and realign states as needed.
  It is well-know that bisimilarity can be weakened to
  support weak trace equality.
  Similarly, we can adapt relative bisimilarity to 
  support \emph{weak relative trace equality} (i.e., weak trace equality implies weak trace equality).
  Concretely, we can define a relation $\texttt{weak\_rbisim}$ which enables skipping $\tau$-labeled steps from any of the four states without changing the other three.
  For the purpose of clarity, and since recent works on hardware-software contracts did not make use of $\tau$ labels, the rest of the paper also sticks to the strict setting without $\tau$'s.
  We refer the curious readers to our Rocq development for the definition and the soundness proof of $\texttt{weak\_rbisim}$.
  We leave the completeness proof as future work.
  
  \subsection{A Deductive System for Relative Bisimilarity}
  \label{sec:proofsystem}

  As stated in Theorem~\ref{thm:rbisim-proofs}, relative bisimilarity provides a
  sound and complete state-based technique to prove relative trace equality.
  Still, finding a suitable relational invariant (call it $R$) and checking that it is indeed a relative bisimulation (i.e., $R \subseteq \texttt{rbisimF}(R)$) can be tedious.
  Instead, we propose a simple deductive system to prove relative bisimilarity.
  Our deductive system is based on \emph{parameterized coinduction} \cite{paco}, and enables a modular and incremental proof style that progressively builds a relative bisimulation rather than having to provide it up-front.

  \paragraph*{Parameterized Coinduction}

  Parameterized coinduction (Paco) was introduced by Hur et al. \cite{paco} as a technique to simplify proofs by coinduction in interactive proof assistants.
  The key idea is to replace predicates defined as greatest fixpoints $\nu F$ by \emph{parameterized predicates} $G_F(I) \triangleq \nu X. \ F(X \cup I)$ where $I$ can be thought of as a coinduction hypothesis ``in construction''.
  By definition it is clear that $G_F(\emptyset) = \nu F$, which means
  that proving $X \subseteq \nu F$ is equivalent to proving $X \subseteq G_F(\emptyset)$.
  The benefit of replacing
  $\nu F$ with $G_F(\emptyset)$ is that it supports useful rules to construct an invariant incrementally.
  In particular, the following three rules can be derived from the definition of $G_F(I)$ \cite{paco}.
  \begin{mathpar}
    \inferrule[Paco-Init]{
      X \subseteq G_F(\emptyset)
    }{
      X \subseteq \nu F
    }
    \qquad
    \inferrule[Paco-Step]{
      X \subseteq F(I \cup G_F(I))
    }{
      X \subseteq G_F(I)
    }\qquad
    \inferrule[Paco-Accumulate]{
      X \subseteq G_F(I \cup X)
    }{
      X \subseteq G_F(I)
    }
  \end{mathpar}

  When proving a statement of the form $X \subseteq \nu F$, instead of directly using standard coinduction (which requires guessing $I \supseteq X$ such that $I \subseteq F(I)$), we can use \textsc{Paco-Init} to prove $X \subseteq G_F(\emptyset)$ instead.
  From there, we use the rules 
  \textsc{Paco-Step} and \textsc{Paco-Accumulate} to progressively enlarge the parameter $I$ until it is big enough to directly prove $X \subseteq F(I)$ using \textsc{Paco-Step}.
  In the context of a proof by bisimulation (i.e., when $F = \texttt{bisimF}$), this intuitively corresponds to starting with an empty bisimulation, and following transitions (using \textsc{Paco-Step}) while accumulating pairs of states (using \textsc{Paco-Accumulate}) until we obtain a valid bisimulation relation.
  Importantly, not all states need to be accumulated.
  In many cases, the parameter $I$ does not need to be a full bisimulation relation to conclude a proof, but rather a small fragment of a bisimulation.
  In extreme cases, it even suffices to accumulate a single pair of states to justify the existence of a bisimulation.

  \paragraph*{Proof Quintuples}

  Remember that relative bisimilarity is defined as a greatest fixpoint $\texttt{rbisim} = \nu R. \, \texttt{rbisimF}(R)$.
  This means we can already use the rules of Paco to prove relative trace equality (just take $F = \texttt{rbisimF}$).
  To simplify notations, we avoid referring to $G_\texttt{bisimF}$ directly, and instead introduce the following
  \emph{proof quintuples} to denote the possible states of a proof.
  \begin{definition}[Proof Quintuples]
    \label{def:quintuples}
    Let $F \triangleq \texttt{rbisimF}$. We define
    \begin{align*}
      \gquadruple{R}{s_1}{s_2}{h_1}{h_2} \triangleq (s_1, s_2, h_1, h_2) \in G_F(R)\qquad
      \quadruple{R}{s_1}{s_2}{h_1}{h_2} \triangleq (s_1, s_2, h_1, h_2) \in (R \cup G_F(R))
    \end{align*}
  \end{definition}
  
  The quintuple with a solid box corresponds to a state of a proof ``before a Paco step'', and a dashed box corresponds to a state ``after a Paco step''.
  After using \textsc{Paco-Step}, the current coinduction hypothesis $R$ can be used to conclude the proof. For this reason,
  we often refer to $\fbox{$R$}$ as a \emph{guarded hypothesis} and $\dbox{$R$}$ as an \emph{unguarded hypothesis}.
  This notation (and naming convention) is borrowed from recent works on parameterized coinduction \cite{hyco, almostfair, sff}.

  Directly using the rules of Paco is still not very helpful though: the rules would be formulated in terms of \texttt{rbisimF},
  which hides an inner least fixpoint that users still have to
  handle manually.
  Instead, we specialize the rules of Paco to obtain reasoning principles that are specific to relative bisimilarity.
  In particular, we formulate our reasoning rules in terms of atomic contract steps and hardware steps.

  \begin{theorem}[Core Reasoning Rules of Proof Quintuples]
    \label{def:core-rules}
    Quintuples admit the following rules:
    \begin{mathpar}
      \inferrule[C-Leak]{\leak_\sw(s_1) \neq \leak_\sw(s_2)}{{\gquadruple{R}{s_1}{s_2}{h_1}{h_2}}}
      \qquad
      \inferrule[C-Step]{
        {\gquadruple{R}{\trans_\sw(s_1)}{\trans_\sw(s_2)}{h_1}{h_2}}
      }{
        {\gquadruple{R}{s_1}{s_2}{h_1}{h_2}}
      }
      \qquad
      \inferrule[H-Step]{
        \leak_\hw(h_1) = \leak_\hw(h_2)\\\\
        \quadruple{R}{s_1}{s_2}{\trans_\hw(h_1)}{\trans_\hw(h_2)}
      }{
        {\gquadruple{R}{s_1}{s_2}{h_1}{h_2}}
      }
    \end{mathpar}
    \begin{mathpar}
      \inferrule[Invariant]{(s_1, s_2, h_1, h_2) \in R' \\\\ \forall (s'_1, s'_2, h'_1, h'_2) \in R'. \ {\gquadruple{R \cup R'}{s'_1}{s'_2}{h'_1}{h'_2}}}{{\gquadruple{R}{s_1}{s_2}{h_1}{h_2}}} \qquad
      \inferrule[Cycle]{(s_1, s_2, h_1, h_2) \in R}{{\quadruple{R}{s_1}{s_2}{h_1}{h_2}}}\qquad
      \inferrule[Guard]{{\gquadruple{R}{s_1}{s_2}{h_1}{h_2}}}{{\quadruple{R}{s_1}{s_2}{h_1}{h_2}}}
    \end{mathpar}
  \end{theorem}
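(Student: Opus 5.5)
We derive each rule from the definition of the quintuples, $G_F(R) = \nu X.\, F(X \cup R)$ with $F = \texttt{rbisimF}$. The basic tools are the unfolding law $G_F(R) = F(R \cup G_F(R))$ (Tarski, using monotonicity of $F$), monotonicity of $G_F$ in its parameter, and the Paco rules \textsc{Paco-Step} and \textsc{Paco-Accumulate}. We also use monotonicity of $\texttt{rbisimF}$ itself. This holds because $R_1$ occurs only positively in the body of the inner least fixpoint, and $\mu$ is monotone in monotone parameters. Finally, since $\texttt{rbisimF}(R_1)$ is a least fixpoint, it is closed under its three clauses. Writing $Q$ for the quadruple and $S \triangleq R \cup G_F(R)$, this gives: $Q \in F(S)$ whenever $\leak_\sw(s_1)\neq\leak_\sw(s_2)$; $Q \in F(S)$ whenever the $\sw$-stepped quadruple is in $F(S)$; and $Q \in F(S)$ whenever the hardware leaks agree and the $\hw$-stepped quadruple is in $S$.

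\textsc{Cycle} and \textsc{Guard} are immediate from the definition of the dashed quintuple as membership in $R \cup G_F(R)$: take the left or right disjunct. Each core step rule then follows from \textsc{Paco-Step}, i.e.\ it suffices to show $Q \in F(S)$. \textsc{C-Leak} is the first closure clause. For \textsc{H-Step}, the premise says exactly that the hardware-stepped quadruple lies in $S$, so the third closure clause applies.

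For \textsc{C-Step}, the premise gives the $\sw$-stepped quadruple in $G_F(R)$. Unfolding yields membership in $F(S)$, and the second closure clause lifts this to $Q \in F(S)$. This is the only place where the inner $\mu$ matters. I must take care to unfold rather than re-apply \textsc{Paco-Step}, since the inductive clause wants membership in $F(\cdot)$ of the same parameter and not in $S$. I expect this to be the step needing the most attention. The identity $G_F(R) = F(S)$ makes it go through.

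For \textsc{Invariant}, the hypothesis says $R' \subseteq G_F(R \cup R')$. By \textsc{Paco-Accumulate} with $X = R'$, we get $R' \subseteq G_F(R)$. Since $Q \in R'$, this gives $Q \in G_F(R)$.

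If \textsc{Paco-Accumulate} is not taken as given, I would prove it directly. The standard route shows that $Y \triangleq G_F(R\cup X)$ satisfies $Y \subseteq F(Y \cup R)$: from $Y = F(R \cup X \cup Y)$ and $X \subseteq Y$, we get $R \cup X \cup Y = R \cup Y$. Coinduction then gives $Y \subseteq G_F(R)$.
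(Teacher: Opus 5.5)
Your proposal is correct and takes essentially the paper's route: \textsc{C-Leak}, \textsc{C-Step} and \textsc{H-Step} come from \textsc{Paco-Step} together with the three closure clauses of the inner least fixpoint, \textsc{Invariant} comes from \textsc{Paco-Accumulate}, and \textsc{Cycle} and \textsc{Guard} are immediate from the definition of the dashed quintuple. Your remark that \textsc{C-Step} must use the unfolding $G_F(R) = F(R \cup G_F(R))$ of a guarded premise, rather than membership in $R \cup G_F(R)$, is exactly why that rule keeps its guard.
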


  \paragraph{Step Rules}

  The three rules \textsc{C-Leak}, \textsc{C-Step} and \textsc{H-Step} are derived from \textsc{Paco-Step} and correspond to the three disjuncts with the same names in the definition of relative bisimilarity (Definition~\ref{def:rbisim}).
  On the contract side, the rule \textsc{C-Leak} enables concluding a proof whenever the two contract states have different observations ($\leak_\sw(s_1) \neq \leak_\sw(s_2)$), and the
  the rule \textsc{C-Step} just skips one contract execution step.
  On the hardware side, the rule \textsc{H-Step} requires proving that $\leak_\hw(h_1) = \leak_\hw(h_2)$, and in exchange it replaces 
  the current hardware states by their successors.
  
  Note that \textsc{H-Step} releases the guard of $R$,
  while \textsc{C-Step} does not.
  This asymmetry stems from the fact that hardware steps are treated coinductively, while contract steps are treated inductively within the definition of \texttt{rbisim}.
  Since we want to prove $\den{s_1}_\sw = \den{s_2}_\sw \implies \den{h_1}_\hw = \den{h_2}_\hw$, this difference makes sense:
  taking a hardware step makes progress towards proving $\den{h_1}_\hw = \den{h_2}_\hw$, while taking a contract step just ``unrolls'' the assumption $\den{s_1}_\sw = \den{s_2}_\sw$.

  \paragraph*{Cycle and Guard Rules}
  Whenever the guard around the hypothesis is released (i.e., after using the rule \textsc{H-Step}), we can use the rule \textsc{Cycle} to immediately conclude a proof if the four current states are already contained in $R$.
  If not, we restore the guard
  using the rule \textsc{Guard}, and we need to keep going (either by taking more steps, or by extending the hypothesis using \textsc{Invariant}).

  \paragraph*{Invariant Rule}

  The last rule is the \textsc{Invariant} rule, which is derived from \textsc{Paco-Accumulate}.
  It is used to enlarge the current hypothesis by ``guessing'' a relational invariant.
  More precisely, if the current states $(s_1, s_2, h_1, h_2)$ satisfy a relation $R'$,
  we can replace the current hypothesis $R$ with $R \cup R'$.
  In exchange, we then need to prove a quintuple for \emph{all} states in $R'$, not just the current states.
  We will be able to exploit this new hypothesis via the \textsc{Cycle} rule if we re-encounter states satisfying $R'$ later on.

  \paragraph*{Enhanced \textsc{C-Step} Rule}

  We note that \textsc{C-Step} and \textsc{C-Leak} can be combined into a convenient
  third rule to simultaneously skip a contract step and assume equality of (contract) leakage: \begin{mathpar}
    \inferrule[C-Step']{\leak_\sw(s_1) = \leak_\sw(s_2) \implies \gquadruple{R}{\trans_\sw(s_1)}{\trans_\sw(s_2)}{h_1}{h_2}}{{\gquadruple{R}{s_1}{s_2}{h_1}{h_2}}}
  \end{mathpar}
  In practice, we use this rule instead of \textsc{C-Step} because we typically need to exploit the assumption $\leak_\sw(s_1) = \leak_\sw(s_2)$ right after a contract step.

  \paragraph*{Soundness and Completeness of Quintuples}

  By definition, quintuples with an empty hypothesis are equivalent to relative bisimilarity, and hence equivalent to relative trace equality.

  \begin{theorem}
    \label{thm:quintuples-sound-complete}
    $\gquadruple{\emptyset}{s_1}{s_2}{h_1}{h_2} \iff (\den{s_1}_\sw = \den{s_2}_\sw \implies \den{h_1}_\hw = \den{h_2}_\hw)$
  \end{theorem}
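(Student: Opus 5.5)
By Definition~\ref{def:quintuples}, the solid-box quintuple with empty hypothesis unfolds to $(s_1, s_2, h_1, h_2) \in G_F(\emptyset)$ with $F = \texttt{rbisimF}$. My plan is therefore to reduce the statement to two facts: $G_F(\emptyset) = \nu F = \texttt{rbisim}$, and then Theorem~\ref{lem:sound}, which identifies $\texttt{rbisim}$ with relative trace equality. Chaining the two equivalences gives exactly the claim.

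For the first fact, I unfold $G_F(\emptyset) = \nu X.\, F(X \cup \emptyset) = \nu X.\, F(X) = \nu F$. This step should be definitional. It does require $F$ to be monotone, so that $G_F$ and $\nu F$ are well defined via Tarski's theorem.

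The step I expect to need real care is therefore monotonicity of $\texttt{rbisimF}$ in $R_1$. I would argue it as follows. Suppose $R_1 \subseteq R_1'$. Let $\Phi_{R_1}(R_2)$ denote the body of the inner least fixpoint in Definition~\ref{def:rbisim}.
\begin{itemize}
\item $\Phi_{R_1}$ is monotone in $R_2$: the \textsc{C-Step} disjunct mentions $R_2$ only positively, and the other disjuncts do not mention it. So $\mu \Phi_{R_1}$ exists.
\item $\Phi_{R_1}(R_2) \subseteq \Phi_{R_1'}(R_2)$ for every $R_2$, because $R_1$ appears only positively, in the \textsc{H-Step} disjunct.
\end{itemize}
Then $\mu \Phi_{R_1} \subseteq \mu \Phi_{R_1'}$ follows from the induction principle (fixpoint induction theorem). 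Take $I = \mu\Phi_{R_1'}$. We have $\Phi_{R_1}(I) \subseteq \Phi_{R_1'}(I) = I$, so $\mu\Phi_{R_1} \subseteq I$.

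With monotonicity established, both directions of the statement follow. Soundness is $G_F(\emptyset) \subseteq \nu F$ together with the forward direction of Theorem~\ref{lem:sound}. Completeness is the backward direction of Theorem~\ref{lem:sound} together with $\nu F \subseteq G_F(\emptyset)$, which holds since the two sets are equal. Alternatively, the soundness direction can be read off directly from rule \textsc{Paco-Init}.
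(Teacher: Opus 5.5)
Your proposal is correct and follows the paper's own argument. Unfold the quintuple to $G_{\texttt{rbisimF}}(\emptyset) = \nu\,\texttt{rbisimF} = \texttt{rbisim}$, then invoke Theorem~\ref{lem:sound}; your explicit monotonicity check for the inner least fixpoint in $\texttt{rbisimF}$, which the paper leaves implicit, is correctly argued.
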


  A result that is a little less obvious is that the six proof rules of Theorem~\ref{def:core-rules} constitute a complete deductive system for relative trace equality.

  \begin{theorem}[Completeness]
    Suppose that $(\den{s_1}_\sw = \den{s_2}_\sw \implies \den{h_1}_\hw = \den{h_2}_\hw)$, then we can prove $
      \fbox{$\emptyset$} \vdash \begin{bmatrix}
      s_1 & h_1\\
      s_2 & h_2
      \end{bmatrix}$
    using only the rules \textsc{C-Leak}, \textsc{C-Step}, \textsc{H-Step}, \textsc{Invariant}, \textsc{Cycle}, and \textsc{Guard}.
  \end{theorem}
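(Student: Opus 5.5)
The plan is to build an explicit derivation that mirrors the two cases in the completeness half of Theorem~\ref{lem:sound}. The key idea is to guess the right invariant once, via \textsc{Invariant}, and then close every branch with step rules, \textsc{Guard}, and \textsc{Cycle}.

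Define the relation
\[
  R' \triangleq \{ (s'_1, s'_2, h'_1, h'_2) \mid \den{s'_1}_\sw = \den{s'_2}_\sw \implies \den{h'_1}_\hw = \den{h'_2}_\hw \},
\]
that is, the set of all quadruples satisfying relative trace equality. By hypothesis $(s_1, s_2, h_1, h_2) \in R'$. So applying \textsc{Invariant} with $R'$ reduces the goal $\fbox{$\emptyset$} \vdash \ldots$ to the following obligation: for every $(s'_1, s'_2, h'_1, h'_2) \in R'$ we must derive the guarded quintuple with hypothesis $\emptyset \cup R' = R'$. From here on I fix such a quadruple and split on whether $\den{s'_1}_\sw = \den{s'_2}_\sw$. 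This is classical case analysis; the Rocq development presumably uses excluded middle here, and I will too.

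\textbf{Case 1:} $\den{s'_1}_\sw = \den{s'_2}_\sw$. Then $\den{h'_1}_\hw = \den{h'_2}_\hw$ holds, so in particular $\leak_\hw(h'_1) = \leak_\hw(h'_2)$, and I apply \textsc{H-Step}. This leaves the unguarded goal for $(s'_1, s'_2, \trans_\hw(h'_1), \trans_\hw(h'_2))$. This quadruple lies in $R'$, because the tails of equal hardware traces are equal, so $\den{\trans_\hw(h'_1)}_\hw = \den{\trans_\hw(h'_2)}_\hw$ holds outright and the implication defining $R'$ is trivially satisfied. \textsc{Cycle} therefore closes the goal.

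\textbf{Case 2:} $\den{s'_1}_\sw \neq \den{s'_2}_\sw$. Then there is a least $n$ with $\leak_\sw(\trans^n_\sw(s'_1)) \neq \leak_\sw(\trans^n_\sw(s'_2))$. Here I must be careful: the trace differs at a leak, not necessarily at a state, so I take $n$ as the first position where the traces disagree. I then prove, by induction on $n$ and generalizing over $s'_1, s'_2$, that the guarded quintuple with hypothesis $R'$ holds whenever the leaks of $s'_1$ and $s'_2$ first differ at position $n$. The base case $n = 0$ is \textsc{C-Leak}. In the inductive case I apply \textsc{C-Step} and use the induction hypothesis on the successor states, whose leaks first differ at position $n-1$. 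This case uses no hardware reasoning at all.

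The main subtlety is Case 1. We must verify that after one \textsc{H-Step} the new quadruple is back in $R'$, so that \textsc{Cycle} is applicable, and we must notice that the guard is released only by \textsc{H-Step}. That is exactly why Case 1 has to take a hardware step before cycling, and why Case 2, which never releases the guard, has to terminate by induction instead. Choosing $R'$ to be the full relative-trace-equality relation avoids having to invent any problem-specific invariant, so the rest is bookkeeping. The rules \textsc{Guard} is not even needed in this particular derivation.
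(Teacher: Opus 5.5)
Your derivation is correct. Its skeleton matches the paper's: a single \textsc{Invariant} with the largest possible invariant, after which each branch closes either by finitely many \textsc{C-Step}s ending in \textsc{C-Leak}, or by \textsc{H-Step} followed by \textsc{Cycle}. (\textsc{Guard} is unused in the paper's derivation too.)

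The difference is in how the members of the invariant are handled. The paper takes the invariant to be \texttt{rbisim} and gets membership of the initial quadruple from Theorem~\ref{thm:rbisim-proofs}. It then inducts on the inductive (least-fixpoint) part of $\texttt{rbisimF}(\texttt{rbisim})$, discharging each of its three clauses with the rule of the same name. In the \textsc{H-Step} clause the successor quadruple is in \texttt{rbisim} by definition, so \textsc{Cycle} applies directly. You instead use the semantic relation itself, which is extensionally the same set by Theorem~\ref{lem:sound}. You then replace the structural induction by a classical case split on $\den{s'_1}_\sw = \den{s'_2}_\sw$ together with induction on a disagreement index. In effect, you inline the completeness half of Theorem~\ref{lem:sound} into the derivation.

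The paper's route is more modular. The classical reasoning stays inside Theorem~\ref{lem:sound}, and the argument shows the six rules mirroring Definition~\ref{def:rbisim} clause by clause, so any proof of \texttt{rbisim}-membership translates mechanically into a derivation. Your route is self-contained and produces a more rigid normal form, in which no branch interleaves contract and hardware steps.

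Two minor points. Minimality of $n$ is unnecessary, since you already generalize over $s'_1, s'_2$. Phrasing the disagreement as $\leak_\sw(\trans^n_\sw(s'_1)) \neq \leak_\sw(\trans^n_\sw(s'_2))$, rather than as a disagreement of states, is the right formulation.
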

  \begin{proof}
    Suppose $\den{s_1}_\sw = \den{s_2}_\sw \implies \den{h_1}_\hw = \den{h_2}_\hw$.
    By Theorem~\ref{thm:rbisim-proofs}, it follows that $(s_1, s_2, h_1, h_2) \in \texttt{rbisim}$, and we can use $\texttt{rbisim}$ itself as an invariant via the $\textsc{Invariant}$ rule. The rest of the proof proceeds by induction on the inductive part of \texttt{rbisim}. The three cases ($\comment{C-Leak}$, $\comment{C-Step}$, and $\comment{H-Step}$) are covered using the three rules with the same names.
    The case of $\comment{C-Step}$ is concluded by induction. The case of $\comment{H-Step}$ is concluded by \textsc{Cycle}.
  \end{proof}

  \section{Case Study}
  \label{sec:examples}

  We used the proof system presented in \Cref{sec:proofsystem} to formally prove two of the most important ideas underlying the CT-SPEC contract, which was developed for a hardware model that features branch speculation and out-of-order execution~\cite{hwswcontracts}.
  The first idea is to abstract away the branch predictor by employing a contract which
  always starts by mispredicting (as described in \Cref{sec:HW-SW-contracts}).
  The second idea is to abstract away the out-of-order execution of instructions by employing a contract which executes instructions sequentially.
  Our formalization of both abstractions use a simpler ISA and hardware semantics as opposed to~\cite{hwswcontracts}, but they contain the key features that make CT-SPEC such a useful abstraction.
  We present the first example, the always-mispredict abstraction, in detail and give a short overview of the second example. A more detailed description of the out-of-order example can be found in 
  \ifthenelse{\boolean{fullversion}}
  {the appendix in \Cref{app:OOOexample}}
  {the full version of the paper~\cite{fullVersion}}.

  \subsection{A Small ISA}

  We formulate a small toy ISA that consists of three instructions that manipulate two registers, and can load from memory. We omit stores for this example, as they behave just as loads from a leakage point of view. For similar reasons, we ignore expressions other than addition.

  \begin{center}
	\begin{tabular}{llcl}
	\textit{Registers} 	&  $x$		& $\in$ & $\set{r_1, r_2}$ \\
	\textit{Constants} 		&  $k$ 		& $\in$ & $\Int$  \\
	\textit{Locations} 		&  $l$ 		& $\in$ & $\Nat$  \\
	\textit{Instructions} 	&  $i$ 		& $:=$ & $\pload{x_1}{x_2} \mid \tadd{x_1}{x_2}{k} \mid \pjz{x}{l}$ \\ %
	\textit{Programs}		&  $p$		& $:=$ & $i \mid \pseq{p_1}{p_2} $%
	\end{tabular}
\end{center}

  An architectural state $\sigma = (m, a, \pc)$ consists a memory $m : \Nat \to \Int$, a register assignment $a : \set{r_1, r_2} \to \Int$, and a program counter $\pc : \Nat$.
  Technically, the program $P$ is also part of the architectural state, but we keep $P$ a global variable in the semantics.
  The hardware state $s = (\sigma, c)$ adds to the architectural state a cache $c$, storing the recently accessed addresses.
  
  The vanilla architectural and hardware semantics implement the expected behavior. 
  To simplify the presentation of subsequent examples, the semantics are given as a base relation describing the effect on the state when executing a single instruction. The full semantics are given in \ifthenelse{\boolean{fullversion}}
  {the appendix in \Cref{app:fullSemantics}}
  {the full version~\cite{fullVersion}}. As an example, the two rules for loads are the following:
  \begin{mathpar}
    \inferrule[LoadArch]
    { v = m(a(x_2)) }
    {
      \tup{m, a, \pc, \pload{x_1}{x_2}} \archsingleStep{} \tup{m, a[x_1 \mapsto v], \pc + 1}
    }

    \inferrule[LoadHW]
      {
        (\sigma, \pload{x_1}{x_2}) \archsingleStep{} \sigma'
      }
      {
        \tup{\sigma, c, \pload{x_1}{x_2}} \hwsingleStep{c} \tup{\sigma', a(x_2) \cons c}
      }
  \end{mathpar}
  Apart from state transformations, the hardware semantics also models a cache-based attacker by adding observation labels to the semantics. In this case, the attacker can observe the contents of the cache at any time, so we expose it in every step.
  For readability, we omitted from the semantics the fact that values, when used as addresses in the \textsc{Load} rules, are truncated to 0 if they are negative.

  \subsection{Speculation and Always-Mispredict Contract}
  \label{subsec:always-mispredict-example}

    \begin{figure}
  \small
  \begin{mathpar}
    \inferrule[Rollback]
    {
      s' = (\sigma_b, s.c)
    }
    {
      \tup{s, 0, (\mathsf{false}, \sigma_b)} \hwstep{s.c} \tup{s', \infty, \bot}
    }

    \inferrule[Commit]
    { }
    {
      \tup{s, 0, (\mathsf{true}, \sigma_b)} \hwstep{s.c} \tup{s, \infty, \bot}
    }
    \vspace{-1pt}

    \inferrule[Step]
    {
      \mathsf{cp} \neq \bot \lor P(s.\pc) \neq \pjz{\cdot}{\cdot} \\
      (s, P(s.\pc)) \hwsingleStep{o} s'
    }
    {
      \tup{s, \omega + 1, \mathsf{cp}} \hwstep{o} \tup{s', \omega, \mathsf{cp}}
    }
    \vspace{-4pt}

    \inferrule[Branch Next]
    {
      P(s.\pc) = \pjz{x}{l} \\
      \Phi (s.\pc) = \mi{next}\\
      (s, \pjz{x}{l}) \hwsingleStep{o} (\sigma', \_) \\
      \mathsf{b} = \sigma'.\pc \overset{?}{=} \, s.\pc + 1
    }
    {
      \tup{s, \infty, \bot} \hwstep{s.c} \tup{s[pc \mapsto s.\pc + 1], w, (\mathsf{b}, \sigma')}
    }
    \vspace{-4pt}

    \inferrule[Branch jump]
    {
      P(s.\pc) = \pjz{x}{l} \\
      \Phi (s.\pc) = \mi{jump}\\
      (s, \pjz{x}{l}) \hwsingleStep{o} (\sigma', \_) \\
      \mathsf{b} = \sigma'.\pc \overset{?}{=} \, l
    }
    {
      \tup{s, \infty, \bot} \hwstep{s.c} \tup{s[pc \mapsto l], w, (\mathsf{b}, \sigma')}
    }

    \end{mathpar}
  \caption{Hardware semantics with a branch predictor $\Phi$.}
  \Description{Hardware semantics with a branch predictor}
  \label{fig:mispredict-hard}
  \end{figure}

  \paragraph{Speculating Hardware Semantics}
  We now come back to our example from \Cref{subsec:sec2-always-mispredict} and formalize a speculating hardware semantics and an always-mispredict contract.
  We extend the vanilla hardware semantics with speculative execution in \Cref{fig:mispredict-hard}.
  The microarchitecture now contains a branch predictor $\Phi: \Nat \to \set{\mi{jump}, \mi{next}}$, which predicts the outcome of a branching decision based on the current program counter.
  This function is used as a global variable as it does not change during execution.
  It could easily be extended with a local state to model training of the predictor.
  We use $s.\pc, s.c$, and $s.a$ to access the components of the hardware state. We now also assume that a program $P$ is being executed, compared to before, where we only executed a single instruction. We write $P(s.\pc)$ to access the instruction that the program counter points to.

  The hardware now keeps track of a speculation window $\omega \in \Nat \cup \set{\infty}$ and a checkpoint pair $\mathsf{cp}$.
  When not in speculation, $\omega = \infty$ and $\mathsf{cp} = \bot$.
  Upon reaching a branching instruction (rules \textsc{Branch Next} and \textsc{Branch Jump}), we initialize the speculation window with a constant $w$. The checkpoint serves as an oracle that will be queried after $w$ steps to check if the prediction was correct (Boolean tag \textsf{b}) and where to rollback to (state $\sigma_b$) if the prediction was wrong. This is implemented in rules \textsc{Rollback} and \textsc{Commit} and models the fact that the hardware only learns later if the branching decision was correct.
  Our semantics does not allow nested speculation (this is further discussed in \Cref{sec:discussion}).

  \paragraph{Always-Mispredict Contract}
  We adapt the always-mispredict technique described in \Cref{subsec:sec2-always-mispredict} to our example with the contract given in \Cref{fig:mispredict-contr}. 
  Whenever the contract executes a load, the address of the load is exposed as an observation label.
  Similar to the speculating hardware semantics, the contract uses a speculation window $\omega$ and a rollback state $\sigma_b$. 
  Differently, the contract gets immediate access to the correct branch.
  This enables it to always first execute the wrong branch before returning to the correct one.
  Following the CT-SPEC contract, we also expose what the correct branching decision would be, which is an adoption of the constant-time model.

  \begin{figure}
  \small
  \begin{mathpar}

    \inferrule[Step]
    {
      \cstepstyle{o} =
        {
          \begin{cases}
            \cstepstyle{\sigma.a(x_2)} & \text{if } P(\sigma.\pc) = \pload{x_1}{x_2}\\
            \cstepstyle{\sigma.a(x) \,\overset{?}{=}\, 0} & \text{if } P(\sigma.\pc) = \pjz{x}{l} \\
            \cstepstyle{\bot} & \text{otherwise}
          \end{cases}
        } \\
      \sigma_b \neq \bot \lor P(\sigma.\pc) \neq \pjz{\cdot}{\cdot}\\
      (\sigma, P(\sigma.\pc)) \archsingleStep{o} \sigma' 
    }
    {
      \tup{\sigma, \omega + 1, \sigma_b} \amstep{o} \tup{\sigma', \omega, \sigma_b}
    }
    \vspace{-3pt}

    \inferrule[Rollback]
    { }
    {
      \tup{\sigma, 0, \sigma_b} \amstep{} \tup{\sigma_b, \infty, \bot}
    }

    \inferrule[Branch]
    {
      P(\sigma.\pc) = \pjz{x}{l} \\
      \hspace{-3pt}(\sigma, \pjz{x}{l}) \!\archsingleStep{o}\! \sigma' \\
      \hspace{-3pt}\pc_\bot = (a(x) \overset{?}{=} 0) ~?~ (\pc + 1) : l
    }
    {
      \tup{\sigma, \infty, \bot} \amstep{\sigma.a(x) \,\overset{?}{=}\, 0} \tup{\sigma[pc \mapsto \pc_\bot], w, \sigma'}
    }
    \end{mathpar}
  \caption{Always-mispredict contract.}
  \Description{Always-mispredict contract.}
  \label{fig:mispredict-contr}
  \end{figure}

  We note that even though all our semantics are presented in a small-step relational style for clarity, they can equivalently be stated in a functional style
  by defining separately two functions $\trans$ and $\leak$:
  for a small-step labeled relation $s \xrightarrow{o} s'$, 
  we have $s \xrightarrow{o} s' \iff s' = \trans(s) \wedge o = \leak(s)$.

  \subsection{Proof of the Always-Mispredict Contract}
  \label{example-am}

  We use our proof system to formally prove that the always-mispredict contract is satisfied by our simplified model of a speculating CPU.
  More precisely, for any program $P$, for any initial memories and register assignments $m_1, m_2, a_1, a_2$, for any initial cache $c$, and for any initial program counter $\pc$, we prove the following statement:
  \begin{align*}
    \ceval{P}{((m_1, a_1, \pc), \infty, \bot)} = \ceval{((m_2, a_2, \pc), \infty, \bot)} \implies\\
    \hweval{P}{((m_1, a_1, \pc, c), \infty, \bot)}{} = \hweval{P}{((m_2, a_2, \pc, c), \infty, \bot)}
  \end{align*}
  By the soundness of our proof system (Theorem \ref{thm:quintuples-sound-complete}), it suffices to instead prove
  \begin{align*}
    \gquadruple{\emptyset}
      {((m_1, a_1, \pc), \infty, \bot)}
      {((m_2, a_2, \pc), \infty, \bot)}
      {((m_1, a_1, \pc, c), \infty, \bot)}
      {((m_2, a_2, \pc, c), \infty, \bot)}
  \end{align*}

  \noindent We will use all quadruples of states of this exact form as an invariant.
  We define \[
    I \triangleq \{ \ \textrm{all} \ \bigl(
      ((m_1, a_1, \pc), \infty, \bot),
      ((m_2, a_2, \pc), \infty, \bot),
      ((m_1, a_1, \pc, c), \infty, \bot),
      ((m_2, a_2, \pc, c), \infty, \bot)
    \bigr) \ \}
  \]
  and we use the \textsc{Invariant} rule to obtain $I$ as a (guarded) hypothesis.
  \begin{align*}
    \gquadruple{I}
      {((m_1, a_1, \pc), \infty, \bot)}
      {((m_2, a_2, \pc), \infty, \bot)}
      {((m_1, a_1, \pc, c), \infty, \bot)}
      {((m_2, a_2, \pc, c), \infty, \bot)}
  \end{align*}
  We note that when using the deductive system in Rocq, we do not have to explicitly write down the definition of $I$ (because it just consists of all states of the same form as the one currently being in focus).
  From there, we can start making progress by inspecting possible execution steps.
  We perform a case analysis on the instruction $P(\pc)$ currently being executed.

  The cases of $\kywd{add}$ and $\kywd{load}$ are very similar, but $\kywd{load}$ is slightly more involved because loading an address modifies the
  microarchitectural state.
  We therefore only detail the case of $\kywd{load}$.
  Supposing $P(\pc) = \pload{x_1}{x_2}$, we start by applying the rule \textsc{C-Step}.
  Importantly, this enables us to assume that the leakage is the same for both contract states.
  By definition of the always-mispredict contract, loads leak the loaded
  address, so we get to assume $a_1(x_2) = a_2(x_2)$.
  \begin{align*}
    &\gquadruple{I}
      {((m_1, a_1, \pc), \infty, \bot)}
      {((m_2, a_2, \pc), \infty, \bot)}
      {((m_1, a_1, \pc, c), \infty, \bot)}
      {((m_2, a_2, \pc, c), \infty, \bot)}\\
    \pcomment{\textsc{C-Step'}} & \textrm{Suppose} \ a_1(x_2) = a_2(x_2) = \mi{adr} \ \pcomment{1}\\
    &\gquadruple{I}
      {((m_1, a_1[x_1 \mapsto m_1(\mi{adr})], \pc + 1), \infty, \bot)}
      {((m_2, a_2[x_1 \mapsto m_2(\mi{adr})], \pc + 1), \infty, \bot)}
      {((m_1, a_1, \pc, c), \infty, \bot)}
      {((m_2, a_2, \pc, c), \infty, \bot)}
  \end{align*}
  Now, we can use the rule \textsc{H-Step} to get access to the hypothesis $I$.
  To do so, we have to prove that the leakage is the same for both hardware states.
  Here the content of the cache is leaked, which is not a problem because we started from the same cache $c$ in both hardware states.
  Further, the contract guarantees that the loaded addresses must be the same address $\mathit{adr}$ (assumption $(1)$), so the content of the cache remains the same in both hardware states even after loading.
  Therefore, the invariant $I$ is restored after executing the load, and we can conclude using \textsc{Cycle}.
  \begin{align*}
    \pcomment{\textsc{H-Step}} & c = c \ \wedge\\
    &\quadruple{I}
      {((m_1, a_1[\ldots], \pc + 1), \infty, \bot)}
      {((m_2, a_2[\ldots], \pc + 1), \infty, \bot)}
      {((m_1, a_1[x_1 \mapsto m_1(a_1(x_2))], \pc, a_1(x_2) \cons c), \infty, \bot)}
      {((m_2, a_2[x_1 \mapsto m_2(a_2(x_2))], \pc, a_2(x_2) \cons c), \infty, \bot)}\\
    \pcomment{By (1)}
    &\quadruple{I}
      {((m_1, a_1[\ldots], \pc + 1), \infty, \bot)}
      {((m_2, a_2[\ldots], \pc + 1), \infty, \bot)}
      {((m_1, a_1[x_1 \mapsto m_1(\underline{\mi{adr}})], \pc, \underline{\mi{adr}} \cons c), \infty, \bot)}
      {((m_2, a_2[x_1 \mapsto m_2(\underline{\mi{adr}})], \pc, \underline{\mi{adr}} \cons c), \infty, \bot)}\\
    \pcomment{\textsc{Cycle}} & (
      (\ldots, \infty, \bot),
      (\ldots, \infty, \bot),
      ((\ldots, \mi{adr} \cons c), \infty, \bot),
      ((\ldots, \mi{adr} \cons c), \infty, \bot)
    ) \in I
  \end{align*}

  \newcommand{\teq}{\,\overset{?}{=}\,}

  Next, we cover the case where $P(\pc) = \pjz{x}{l}$. This is the most interesting case
  since we have to reason about the different ways in which the CPU might
  speculate. We first take a contract step.
  \begin{align*}
    &\gquadruple{I}
      {((m_1, a_1, \pc), \infty, \bot)}
      {((m_2, a_2, \pc), \infty, \bot)}
      {((m_1, a_1, \pc, c), \infty, \bot)}
      {((m_2, a_2, \pc, c), \infty, \bot)}\\
    \pcomment{\textsc{C-Step'}} & \textrm{Suppose} \ (a_1(x) \teq 0) = (a_2(x) \teq 0) \ \pcomment{2}\\
    &\gquadruple{I}
      {((m_1, a_1, \pc_\bot), w, (m_1, a_1, \pc_\top))}
      {((m_2, a_2, \pc_\bot), w, (m_1, a_2, \pc_\top))}
      {((m_1, a_1, \pc, c), \infty, \bot)}
      {((m_2, a_2, \pc, c), \infty, \bot)}
  \end{align*}

  In this step, the always-mispredict contract jumps to the incorrect location first,
  saves the correct future state, and reveals the positivity of the register $x$.
  In particular, we get to assume that $(a_1(x) \teq 0) = (a_2(x) \teq 0)$.
  Thanks to this assumption, we can also deduce that the two contract states jump to the same (incorrect) location.
  Importantly, the two contract states are then in \emph{speculation mode} for $w$
  steps, where $w$ is the maximal speculation window.
  
  We now have to reason about what happens on the hardware side:
  the branch predictor will be queried, resulting in a jump to either the
  correct or the incorrect next location.
  Since our branch predictor depends only on the program counter,
  both hardware states will jump to the same (correct or incorrect) location.
  We denote the correct and the incorrect locations after executing $\pjz{x}{l}$ with $\pc_\top$ and $\pc_\bot$.
  
  The easier case is when the hardware, as the contract, also mispredicts.
  In this case, 
  both the contract and the hardware side are then temporarily executing the same (incorrect) code. Therefore, $w$ steps of straightforward lockstep reasoning are sufficient to skip-through the misprediction, re-establish the invariant $I$, and conclude.

  The more challenging case is when the hardware correctly
  predicts the outcome of the branch.
  Contrary to the previous cases, this one critically requires non-lockstep reasoning as the contract and the hardware are temporarily executing different instructions.
  Starting from the configuration where the two contract states are
  positioned at the same incorrect location, the two hardware states
  jump to the same \emph{correct} location after using $\textsc{H-Step}$.
  \begin{align*}
    \pcomment{\textsc{H-step} + (2)}
    &\gquadruple{I}
      {(\ldots, w, (m_1, a_1, \pc_\top))}
      {(\ldots, w, (m_2, a_2, \pc_\top))}
      {\hspace{-8pt}((m_1, a_1, \pc_\top, c), w, (\texttt{true}, (m_1, a_1, \pc_\bot)))}
      {\hspace{-8pt}((m_2, a_2, \pc_\top, c), w, (\texttt{true}, (m_2, a_2, \pc_\bot)))}
  \end{align*}
  Because the contract leaked the positivity of the register (assumption (2)), we know that the speculation will eventually be resolved in the same way for both hardware states (in particular, the flag is set to \texttt{true} in both).
  From there, we exploit the asynchronicity of our proof system to skip through $w$ steps
  on the contract side. After these $w$ steps, the two contract states roll back.
  \begin{align*}
    \pcomment{$w$ times \textsc{C-Step}}&\gquadruple{I}
      {((m_1, a_1, \pc_\top), \infty, \bot)}
      {((m_2, a_2, \pc_\top), \infty, \bot)}
      {\hspace{-8pt}((m_1, a_1, \pc_\top, c), w, (\texttt{true}, (m_1, a_1, \pc_\bot)))}
      {\hspace{-8pt}((m_2, a_2, \pc_\top, c), w, (\texttt{true}, (m_2, a_2, \pc_\bot)))}
  \end{align*}
  Finally, since both the contract and the hardware side are now re-aligned at the same program counter, we can conclude by $w - 1$ steps of lockstep reasoning,
  followed by an asynchronous hardware step to finalize the resolution of the speculation and restore the relational invariant.

  \begin{align*}
    \pcomment{$w - 1$ steps}&\gquadruple{I}
      {((m'_1, a'_1, \pc'), \infty, \bot)}
      {((m'_2, a'_2, \pc'), \infty, \bot)}
      {\hspace{-8pt}((m'_1, a'_1, \pc', c'), 0, (\texttt{true}, (m_1, a_1, \pc_\bot)))}
      {\hspace{-8pt}((m'_2, a'_2, \pc', c'), 0, (\texttt{true}, (m_2, a_2, \pc_\bot)))}\\
    \pcomment{\textsc{H-Step}}&\quadruple{I}
      {((m'_1, a'_1, \pc'), \infty, \bot)}
      {((m'_2, a'_2, \pc'), \infty, \bot)}
      {((m'_1, a'_1, \pc', c'), \infty, \bot)}
      {((m'_2, a'_2, \pc', c'), \infty, \bot)}\\
    \pcomment{\textsc{Cycle}}&
      (
        (\ldots, \infty, \bot)
        (\ldots, \infty, \bot)
        (\ldots, \infty, \bot)
        (\ldots, \infty, \bot)
      ) \in I
  \end{align*}

  We note that the first $w$ steps actually hide a bit of non-lockstep reasoning to cover the case of branches.
  Indeed, even though the hardware states cannot do nested speculation during these $w$ steps, the contract states still can. In that case it suffices to, again, skip through the misspeculation by repeated application of \textsc{C-Step}.
  In Rocq, these redundant phases of the proof can be conveniently factored as lemmas, and composed as needed. Indeed, the compositional nature of parameterized coinduction allows to prove a quintuple once, and reuse it in the derivation of another quintuple.

\subsection{Proof of the Sequential Contract}

Another key idea in the CT-SPEC contract is to
abstract away the fact that modern processors execute instructions \emph{out-of-order} 
and instead use a \emph{sequential} execution semantics.
To validate the correctness of this abstraction, we add simple out-of-order execution to our CPU model.
\arthur{The role of the contract is not so clear anymore with the new formulation.}

\paragraph*{Hardware Semantics and Contract}
We enrich the vanilla hardware semantics with a one-instruction buffer. Concretely, hardware states are pairs $(s, b)$, where $s$ is a vanilla hardware state, and $b$ is a buffer which can either be empty ($b = \bot$) or contain one instruction.
When the buffer is empty, instead of executing the current instruction at $s.\mi{pc}$ (call it $i_1$), the CPU can decide to instead store $i_1$ in the buffer and execute the instruction at $s.\mi{pc} + 1$ first (call it $i_2$).
Importantly, not all instructions can be executed out-of-order without
affecting the semantics of the executed programs.
We define a predicate $\mathit{delayable}(i_1, i_2)$ to describe when $i_1$ can be safely delayed.
We note that branches are not delayable (i.e., $\mathit{delayable}((\textbf{beqz} \cdot, \cdot), i_2)$ is false regardless of $i_2$).
When the buffer is full, the instruction in the buffer is immediately executed, leaving the buffer empty again.
On the contract side, the semantics is straightforward: contract states are composed of an architectural state $\sigma$,
and instructions are executed sequentially by following the ISA semantics.
The predicted leakage is the same as for the always-mispredict contract (loads leak the loaded address, and branches leak the result of the branch condition).

\paragraph*{Contract Satisfaction Proof}

We prove contract satisfaction by deriving the following quintuple, which requires initial states to have the same pc, cache, and buffer: \[
  \gquadruple{\emptyset}
      {(m_1, a_1, \pc)}
      {(m_2, a_2, \pc)}
      {((m_1, a_1, \pc, c),\bot)}
      {((m_2, a_2, \pc, c), \bot)}
\]
As in the previous example, we use all quadruples of states with this exact form as an invariant: \[
  I \triangleq \{ \ \textrm{all} \ ((m_1, a_1, \pc), (m_2, a_2, \pc), ((m_1, a_1, \pc, c),\bot), ((m_2, a_2, \pc, c),\bot)) \ \}
\]
After fixing $I$ as a guarded hypothesis using the \textsc{Invariant} rule, we first do a case analysis on whether the execution of the current instruction is delayed or not.
If not, the contract and the hardware proceeds in lockstep.
The interesting case is when an instruction is delayed. In this case,
the contract executes the instruction at $\mi{pc}$ first, while the hardware executes the instruction at $\mi{pc} + 1$ first.
In the proof, we therefore first use the rule \textsc{C-Step'} twice in order to pre-shot the leakage of the next \emph{two} execution steps.
We can then use \textsc{H-Step} twice to
simulate the execution of the instruction at $\mi{pc} + 1$ followed by the instruction in the buffer.
After this, the buffer is empty and the invariant is restored, enabling us to conclude the proof by \textsc{Cycle}.
Justifying the equality of hardware leakage after each \textsc{H-Step} and proving that the invariant is ultimately restored relies on 
two additional observations:
(1) swapping two delayable instructions preserves the leakage allowed by the contract, and (2) swapping the execution of delayable instructions preserves the semantics.
These two properties are proven as separate lemmas about the $\mi{delayable}$ predicate.

  \section{Up-To Techniques for Relative Bisimilarity}
  \label{sec:uptos}

  \subsection{Exploiting Symmetries and Transitivity in Relative Bisimulation Proofs}

  During a relative trace equality proof, it can be useful to replace one of the four states with another one in order to simplify the proof or to avoid redundant reasoning.
  For example, when proving $\den{s_1}_\sw = \den{s_2}_\sw \implies \den{h_1}_\hw = \den{h_2}_\hw$, we can always replace the contract state $s_1$ (resp. $s_2$) with another contract state $s'_1$ (resp. $s'_2$) such that $\den{s_i}_\sw = \den{s'_i}_\sw$.
  Similarly, by symmetry of equality, we can always swap $s_1$ and $s_2$, or $h_1$ and $h_2$.
  More subtly, we can also exploit transitivity of implication to split proofs.
  For example, if we have to prove $\den{s_1}_\sw = \den{s_2}_\sw \implies \den{h_1}_\hw = \den{h_2}_\hw$, and if we additionally know that $\den{s_1}_\sw = \den{s_2}_\sw \implies \den{s'_1}_\sw = \den{s'_2}_\sw$, we can instead prove $\den{s'_1}_\sw = \den{s'_2}_\sw \implies \den{h_1}_\hw = \den{h_2}_\hw$.
  Intuitively, this can be understood as replacing
  a pair of contract states with two states that ``leak less''.
  Again, the same observations applies to hardware states (i.e., we can replace a pair of hardware states with two states that ``leak more'').

  An interesting question is whether these \emph{tricks} transfer to proof quintuples. Since quintuples with an empty hypothesis are equivalent to relative trace equality (Theorem \ref{thm:quintuples-sound-complete}), it is straightforward to reformulate the above observations as reasoning rules when the hypothesis is $\emptyset$.
  It is less obvious that such principles also apply mid-proof, i.e., when the guarded hypothesis is \emph{not} empty.
  In fact, it not always the case!
  As an example, in the following figure, the rule on the left (which exploits transitivity to replace the hardware states with states that leak more) is easily proven sound, while its generalization on the right is unsound (we show why in \Cref{subsec:transitive-reasoning}).
  \begin{mathpar}
    \inferrule{{\gquadruple{\emptyset}{s_1}{s_2}{h'_1}{h'_2}}\\\gquadruple{\emptyset}{h'_1}{h'_2}{h_1}{h_2}}{{\gquadruple{\emptyset}{s_1}{s_2}{h_1}{h_2}}}
    \qquad \inferrule{{\gquadruple{R}{s_1}{s_2}{h'_1}{h'_2}}\\\gquadruple{\emptyset}{h'_1}{h'_2}{h_1}{h_2}}{{\gquadruple{R}{s_1}{s_2}{h_1}{h_2}}}
  \end{mathpar}

  This is unfortunate, because applying these principles mid-proof can, in some cases, dramatically simplify proofs.
  In particular, using such rules when the hypothesis is unguarded would enable using the \textsc{Cycle} rule even if the current states are not exactly in the hypothesis $R$, but \emph{almost} in it up to symmetries or up to transitivity.
  This would allow us to work with simpler relational invariants, and to better decompose proofs.

  \subsection{Compatible Up-To Techniques}

  In a standard proof by coinduction, we prove a statement $X \subseteq \nu F$ by identifying a coinduction hypothesis $I$ such that $X \subseteq I$ and $I \subseteq F(I)$.
  Often, we find $I$'s such that the second condition is not
  satisfied (i.e., $I \not\subseteq F(I)$),
  but it would be satisfied if we were considering a
  slightly larger $I$ on the right. That is, for a certain
  function $f$ (which given a set, returns a larger set),
  we can easily prove $I \subseteq F(f(I))$.
  It turns out that for $f$'s that satisfy a certain \emph{compatibility criterion}, reasoning up-to $f$-enlargements of $I$ is sound!
  This gives a powerful reasoning principle often referred to as \emph{coinduction up-to $f$} \cite{wayup}.

  \begin{definition}[Compatibility]
    Let $f$ and $F$ be two monotone functions over subsets of a set $U$.
    We say that $f$ is compatible with $F$ if $\forall X. \ f(F(X)) \subseteq F(f(X))$
  \end{definition}

  \begin{theorem}[Coinduction Up-to]
    Suppose $f$ is compatible with $F$, then \[
      (\exists I. \ X \subseteq I \wedge I \subseteq F(f(I))) \implies X \subseteq \nu F
    \]
  \end{theorem}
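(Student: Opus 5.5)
The plan is to reduce coinduction up-to $f$ to plain coinduction, which is already justified by the Fixpoint Coinduction theorem. Given $I$ with $X \subseteq I$ and $I \subseteq F(f(I))$, I will not use $I$ itself as the coinduction hypothesis, since it need not satisfy $I \subseteq F(I)$. Instead I will use a larger invariant that is closed under $f$. The natural candidate is
\[
  J \triangleq \bigcup_{n \in \Nat} f^n(I),
\]
where $f^0(I) = I$. Clearly $X \subseteq I \subseteq J$. It then suffices to show $J \subseteq F(J)$, and the Coinduction theorem gives $X \subseteq \nu F$.

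The key step is to prove $f^n(I) \subseteq F(f^{n+1}(I))$ for every $n$, by induction on $n$. The base case $n = 0$ is exactly the hypothesis $I \subseteq F(f(I))$. For the inductive step, assume $f^n(I) \subseteq F(f^{n+1}(I))$. Monotonicity of $f$ gives $f^{n+1}(I) \subseteq f(F(f^{n+1}(I)))$. Compatibility, instantiated with $X = f^{n+1}(I)$, then gives $f(F(f^{n+1}(I))) \subseteq F(f^{n+2}(I))$, which closes the induction.

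Next, since $f^{n+1}(I) \subseteq J$ and $F$ is monotone, each $f^n(I) \subseteq F(f^{n+1}(I)) \subseteq F(J)$. Taking the union over $n$ yields $J \subseteq F(J)$, as required.

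I expect no real obstacle. The only point needing care is to choose the $f$-closed invariant $J$ rather than $I$ itself, and to apply compatibility at the right set, namely $f^{n+1}(I)$, in the inductive step. Only monotonicity of $F$ and $f$ is needed for unions of this form; no continuity assumption is required, because we only use that each member of the union is included in $F(J)$.

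An alternative I would keep in reserve is to show directly that $f(\nu F) \subseteq \nu F$. This follows by coinduction, since $f(\nu F) = f(F(\nu F)) \subseteq F(f(\nu F))$, and it implies $\nu F \cup f(\nu F) = \nu F$. However, concluding from $I \subseteq F(f(I))$ along that route seems to require the companion construction, so I will prefer the iterate-union argument above.
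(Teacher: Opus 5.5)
Your proof is correct. $J = \bigcup_{n \in \Nat} f^n(I)$ contains $X$, the induction $f^n(I) \subseteq F(f^{n+1}(I))$ applies compatibility at exactly the right set, and monotonicity of $F$ turns this into $J \subseteq F(J)$. The plain Coinduction theorem then gives $X \subseteq \nu F$.

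The paper does not prove this statement itself. It defers to the literature and builds on Pous's companion approach, which argues differently. There one takes the largest compatible function, $t(Y) \triangleq \bigcup \{ g(Y) \mid g \text{ monotone and compatible with } F \}$. One checks that $t$ is compatible (using monotonicity of $F$), extensive (the identity is compatible), idempotent ($t \circ t$ is compatible, hence below $t$), and above $f$. Then $I \subseteq F(f(I)) \subseteq F(t(I))$ gives $t(I) \subseteq t(F(t(I))) \subseteq F(t(t(I))) \subseteq F(t(I))$. So $t(I)$ is a post-fixpoint containing $X$.

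Your iterate-union argument is more elementary and self-contained. It exhibits a concrete invariant $f^\omega(I)$ and needs only monotonicity and the compatibility inequality.

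The companion route is more abstract, but it packs every compatible function into one closure that does not depend on the particular proof. Up-to techniques therefore need not be chosen in advance, and they can be combined and interleaved with parameterized coinduction. That is what the paper actually relies on to justify its \textsc{Up-To} rule for quintuples with a non-empty, possibly guarded, hypothesis. The bare statement you proved does not deliver this by itself.
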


  The idea of coinduction up-to was first explored in the specific context of proofs by bisimulation \cite{SANGIORGI_1998}, and more recently extended in several ways to the general case of proofs by (parameterized) coinduction \cite{lattices_and_upto, paco, wayup, gpaco}.
  Building on these works (and in particular on the ``companion approach'' of Pous \cite{wayup}, and the GPaco library \cite{gpaco}), we can extend our deductive system with the following rule enabling the exploitation of any function $f$ compatible with $\texttt{rbisimF}$:
  \begin{mathpar}
    \inferrule[Up-To]{
      f \textrm{ is compatible with } \texttt{rbisimF}\\
      (s_1, s_2, h_1, h_2) \in f\Bigl(\lambda s'_1, s'_2, h'_1, h'_2. \ \ngquadruple{R}{s'_1}{s'_2}{h'_1}{h'_2}\Bigr)
    }{{\ngquadruple{R}{s_1}{s_2}{h_1}{h_2}}}
  \end{mathpar}

  Note that the absence of a frame around the hypothesis $R$ indicates that the rule \textsc{Up-to} can be used regardless of whether it is guarded or not, but guarded hypotheses remain guarded after applying the rule.
  We omit the soundness proof, which requires a slight adaption of the definition of quintuples (Definition~\ref{def:quintuples}), and we refer the readers to \cite{paco, wayup,gpaco} (and to our Rocq development) for details regarding the combination of up-to techniques and Paco-style proofs.

  Using the newly introduced \textsc{Up-to} rule, we obtain a systematic method to soundly integrate (and combine!) the reasoning principles discussed in the previous subsection.
  It suffices to encode each principle as a compatible function $f$.
  Although these compatibility proofs are
  fairly technical (they required us to prove a dozen auxiliary lemmas about \texttt{rbisimF}), the soundness of each principle then immediately follows from \textsc{Up-to}.
  We show in the next two subsections that all the principles discussed so far can be derived in this way, with the exception of the principle for augmenting hardware leakage. This last principle requires a subtle modification in order to be sound.

  \subsection{Reasoning Up To Symmetries and Equivalences}
  \label{subsec:symmetriesandequivalences}

  Reasoning up to symmetries and leakage-equivalent states is always sound, even mid-proof. This is reflected by the following set of rules. We write $\simeq$ for equivalence of leakage (i.e., trace equality).

  \begin{mathpar}
    \inferrule[C-Swap]{{\ngquadruple{R}{s_2}{s_1}{h_1}{h_2}}}{{\ngquadruple{R}{s_1}{s_2}{h_1}{h_2}}} \qquad
    \inferrule[H-Swap]{{\ngquadruple{R}{s_1}{s_2}{h_2}{h_1}}}{{\ngquadruple{R}{s_1}{s_2}{h_1}{h_2}}} \qquad
    \inferrule[C-Leak-Eq]{s_1 \simeq s'_1 \\ {\ngquadruple{R}{s'_1}{s_2}{h_1}{h_2}}}{{\ngquadruple{R}{s_1}{s_2}{h_1}{h_2}}} \qquad
    \inferrule[H-Leak-Eq]{h_1 \simeq h'_1 \\ {\ngquadruple{R}{s_1}{s_2}{h'_1}{h_2}}}{{\ngquadruple{R}{s_1}{s_2}{h_1}{h_2}}}
  \end{mathpar}

  The four rules are derived from \textsc{Up-To} by choosing an appropriate compatible function.
  For example, the compatible function implementing \textsc{C-Swap} is: \[
    f_{\textsc{C-Swap}}(R) \triangleq \{ \ (s_1, s_2, h_1, h_2) \mid (s_2, s_1, h_1, h_2) \in R \ \}
  \]
  The functions implementing all other rules can be found in \ifthenelse{\boolean{fullversion}}{Section~\ref{sec:up-to-functions} of the appendix}{the full version of the paper~\cite{fullVersion}}.

  We note that by combining \textsc{Swap} rules and \textsc{Leak-Eq} rules, we can replace any of the four states with a leakage-equivalent one.
  Further, since equality of leakage is equivalent to bisimilarity,
  the side-goals $s_1 \simeq s'_1$ and $h_1 \simeq h'_2$ can also be proven by parameterized coinduction.

  \subsection{Transitive Reasoning}
  \label{subsec:transitive-reasoning}
  Composing quintuples horizontally is sound, even mid-proof.
  This allows decomposing a proof by artificially reducing (resp. augmenting) leakage at the contract level (resp. hardware level) as reflected by the following two rules.

  \begin{mathpar}
    \inferrule[Reduce-Contract-Leakage]{{\gquadruple{\emptyset}{s_1}{s_2}{s'_1}{s'_2}}\\{\ngquadruple{R}{s'_1}{s'_2}{h_1}{h_2}}}{{\ngquadruple{R}{s_1}{s_2}{h_1}{h_2}}} \qquad
    \inferrule[Augment-Hardware-Leakage]{{\ngquadruple{R}{s_1}{s_2}{h'_1}{h'_2}}\\{\sgquadruple{\emptyset}{h'_1}{h'_2}{h_1}{h_2}}}{{\ngquadruple{R}{s_1}{s_2}{h_1}{h_2}}}
  \end{mathpar}

  Importantly, the two rules are not exactly symmetrical: on the hardware side, we can only compose with a \mi{lockstep} variant of our quintuples on the right, defined as \[
    \sgquadruple{R}{h'_1}{h'_2}{h_1}{h_2} \triangleq (h'_1, h'_2, h_1, h_2) \in G_{\texttt{rbisimF}_\texttt{lockstep}}(R)
  \]
  Without this restriction, the rule \textsc{Augment-Hardware-Leakage} is unsound. Indeed, consider the following scenario:
  \begin{mathpar}
    \leak_\hw(h_1) \neq \leak_\hw(h_2) \qquad
    \leak_\hw(h'_1) = \leak_\hw(h'_2) \quad \leak_\hw(\trans_\hw(h'_1)) \neq \leak_\hw(\trans_\hw(h'_2))
  \end{mathpar}
  Under these assumptions, for any contract state $s$, we can easily prove the following first two quintuples, and yet disprove the third one: \begin{align*}
    \gquadruple{\top}{s}{s}{h'_1}{h'_2}
    \qquad
    \gquadruple{\emptyset}{h'_1}{h'_2}{h_1}{h_2}
    \qquad
    \neg\Bigl(\gquadruple{\top}{s}{s}{h_1}{h_2}\Bigr)
  \end{align*}
  The first quintuple is proven using \textsc{H-Step}
  (exploiting $\leak(h'_1) = \leak(h'_2)$), and then \textsc{Cycle}.
  The second quintuple is proven using \textsc{C-step} followed by \textsc{C-Leak} (exploiting $\leak_\hw(\trans_\hw(h'_1)) \neq \leak_\hw(\trans_\hw(h'_2))$).
  Even though we proved these two quintuples, the third one does not hold.
  This is because the two contract traces are equal by definition (so we can never use \textsc{C-Leak} to conclude), and because $\leak_\hw(h_1) \neq \leak_\hw(h_2)$ (so we cannot use \textsc{H-Step} to conclude either).
  The main issue in this counterexample is that the second quintuple (involving $h'_1, h'_2, h_1, h_2$) only holds because $h'_1$ and $h'_2$ have different traces, which we were able to discover using non-lockstep reasoning.
  However, the quintuple carries no information
  about $h_1$ and $h_2$, nor about $s_1$ and $s_2$.
  Working with lockstep relative bisimilarity instead fixes the problem: it forces a tighter connection between the four states.

  \section{Discussion: Limitations and Future Work}
  \label{sec:discussion}

  \paragraph{Additional Contract Features}
  We used the proof system to prove some of the key features of the original CT-SPEC contract (the one that uses the always-mispredict technique)~\cite{hwswcontracts}, but there is still some work to be done to prove the entire contract correct.
  Most notably, our hardware semantics uses a simplified scheduler that does not feature a fetch-execute-retire pipeline.
  Furthermore, we are yet to combine the always-mispredict feature with the out-of-order scheduler.
  While the proof will require a non-negligible amount of work, judging from our experience using the proof system, we are confident that the proof will be feasible.
  Additionally, we have not considered nested speculation in this paper.
  Nested speculation is usually modeled 
  by an explicit stack at the contract level.
  At the hardware level, though,
  speculation is typically implemented
  in a less structured way (e.g, using buffers).
  As a result, reasoning about nested speculation requires subtle invariants linking 
  stacks with lower-level data-structures.
  Although our deductive system is complete, finding and expressing such invariants might be difficult in practice.
  To facilitate the process, an idea would be to extend our deductive system with built-in support for reasoning about a
  \emph{logical} view of the hardware
  states rather than directly
  against low-level data-structures (e.g. taking inspiration of \emph{ghost states} in modern separation logics \cite{iris}).
  This would certainly enable working with simpler invariants connecting contract and hardware states.
  \jana{Arthur, could you add a sentence what makes nested speculation tricky?}
  \arthur{I did my best to explain how ghost states could help}

  \paragraph{Towards a Program Logic}
  Our proof system simplifies the process of mechanizing contract satisfaction proofs. Still, it requires explicit reasoning about concrete hardware and contract states. Conducting proofs with the help of a proof assistant such as Rocq significantly helps with the associated bookkeeping. However, for larger proofs and more complicated examples, reasoning implicitly about a more abstract set of states is desirable.
  To achieve this, an idea would be to enable reasoning about 4-ary predicates
  expressed in a carefully curated logic, rather than relying on the logic
  of the proof assistant to reason explicitly about quadruples of concrete states.
  This would certainly enable more modular proofs, in the style of what can be
  achieved using modern mechanized separation logics such as \cite{iris} or \cite{simuliris}.

  \paragraph{Nondeterminism}
  In this paper, we assumed that both the contract and the hardware semantics are deterministic.
  This follows the assumptions made in recent works on hardware-software contracts~\cite{hwswcontracts, speculationatfault,Leave,mohr2024synthesizing,mohr2025synthesis}.
  Adding support for nondeterminism would be a non-trivial research challenge.
  The first question would be how to define contract satisfaction in that setting.
  The current definition (Definition~\ref{def:hwswcontracts}) would not be adequate anymore, as traces might change either because of leakage or because of nondeterminism. \arthur{I don't think it's a correctness issue, it's just that Def 2.1 gives a very strong notion of contract satisfaction in the nondet setting. Probably too strong to prove it in practice. But in a sense, it's just "too" correct.}
  One option could be to use a $\forall\exists$ quantifier alternation, inspired by how the noninterference property can be generalized to nondeterministic settings~\cite{gennoninterference}.
  In the presence of a quantifier alternation, we would have to develop new reasoning principles to resolve existential quantifiers.
  \arthur{I am not sure about the last two sentences: seems like this would be a very very weak notion of contract satisfaction. Maybe we should just stop at "can be generalized to nondeterministic seetings"}

  \paragraph{Relative Bisimulation for SNI}

  The development of our deductive system was motivated by contract satisfaction proofs, but it could find further applications.
  A particularly interesting one is proving Speculative Non-Interference (SNI)~\cite{spectector}, another instance of relative trace equality.
  SNI requires a program to not leak more than what is leaked 
  through sequential, non-speculative execution of the CPU.
  Combining proofs of contract satisfaction and SNI guarantees the absence of information leakage when running a program~\cite{hwswcontracts}: 
  first prove that a CPU satisfies a contract, and then prove that specific programs satisfy SNI with respect to that contract.
  Both proofs can in principle be developed within our deductive system.

  \section{Related Work}

  \paragraph{Contract Validation}
  There is a growing interest in validating CPUs against hardware-software contracts. Validation efforts either employ testing (mostly of black-box CPUs)~\cite{revizor, hideandseek, speculationatfault, amulet, hideandseek}, or they model-check white-box designs~\cite{Leave,shadowlogic}.
  When model-checking hardware-software contracts, the biggest challenge is finding invariants that enable the model checker to find a proof. Existing works tackle this challenge with inductive invariant learning (e.g., LeaVe~\cite{Leave}) or by requiring additional user input (e.g., shadow logic~\cite{shadowlogic}).
  While having the advantage of automation, model-checking techniques might abort (in the case of bounded model checking) or not terminate, leaving the user with little information as to why the proof fails. Additionally, model checking is vulnerable to logical and implementation bugs.
  We are the first to propose a foundational proof system for generating fully mechanized proofs for contract satisfaction.

  \paragraph*{Secure Compilation}
  Our proof system builds on the rich tradition of simulation-based proofs for security.
  4-ary reasoning is particularly prominent in the context of secure compilation to reason about the preservation of constant-timeness~\cite{ctjasmin, barthe2021structured, ctsim, compcertsec}, noninterference~\cite{securitymultithreaded, BARTHE200735}, and, generally, k-safety hyperproperties~\cite{securecomphyper,niopt}.
  Many of these works have also been formalized in a proof assistant.
  Recently, secure compilation w.r.t. constant-timeness and noninterference has been extended to speculative semantics as well~\cite{specconsttime, Wall2024SNIPSE, exorcisingspectres}.
  There are some key differences between our work and secure compilation.
  When proving that a compiler preserves noninterference, 
  high-level objects (source programs) and low-level objects (target programs) are \emph{structurally} very different, either because the compiler is optimizing, or because the source and target languages are different.
  Further, there is a strict dependency between high and low level (i.e., we only reason about target programs obtained by compiling a source program).
  In the case of contracts, there is no compiler between the higher level (the contract) and the lower level (the hardware).
  The program is the same across the four compared executions, but it is executed under two different semantics.
  Our proof system take advantages of these observations, and lets the developer establish an artificial relation between the contract and the hardware for the purpose of the proof.

  \paragraph*{4-ary Reasoning Beyond Secure Compilation}
  Apart from secure compilation, there exist additional security properties that require 4-ary reasoning.
  Example are speculative noninterference~\cite{spectector,hwswcontracts} and \emph{information-preserving refinement}. for the latter, different simulation techniques have been chained to obtain a formal proof spanning the entire hardware-software stack~\cite{AthalyeCKTZ24, knox}.
  Another example is \emph{relative security}, which compares the information flows occurring under different semantics~\cite{DongolGPW24}.
  Hardware-software contract satisfaction can be seen as an instance of relative security.
  Dongol et al~\cite{DongolGPW24} introduce a simulation-based proof technique for relative security.
  In contrast to our setting, they rely on proving that a collection of conditions hold for a simulation relation provided up-front. Our proof system lets us construct the simulation relation on the fly.
  Another distinction is that they use timers to bound their vanilla traces (corresponding to our contract traces), ensuring they do not execute infinitely without making progress. In our framework, this is handled implicitly through the use of coinductive-inductive predicates (the inductive part implicitly encodes the progress assumptions).
  We designed our notion of relative bisimilarity with contract satisfaction in mind, but we expect that it could be applicable to relative security as well.
  \jana{Arthur: I am being a bit bold here, saying that HW-SW contract satisfaction is an instance of relative security and that our relative bisimilarity notion might is probably applicable to that property as well. Check if you're ok with that.}

  \paragraph*{Parameterized Coinduction}

  Our work is strongly rooted in recent developments in the field of interactive proofs by coinduction. In particular, our proof technique leverages the framework of parameterized coinduction, introduced by Hur et al. \cite{paco}.
  Our Rocq formalization is based on their implementation of parameterized coinduction, Paco \cite{pacolib}.
  To prove the soundness of the techniques we presented in Section 5, we also leverage the notion of up-to techniques for coinductive proofs.
  In the paper introducing Paco, Hur et al. already identified that up-to techniques
  can be combined with parameterized coinduction.
  Later on, Pous developed a more general and systematic approach for the principled development and integration of up-to techniques via the so-called \emph{companion approach}~\cite{wayup}.
  More recently, Zakowski et al. showed that the approach of Pous can also be
  integrated in Paco via generalized parameterized coinduction~\cite{gpaco}.
  The extensions we developed in Section 5 rely on their implementation of generalized parameterized coinduction, which is now integrated in the Paco library~\cite{pacolib}.

  \paragraph{Interaction Trees}

  Interaction Trees (ITrees) are a general purpose coinductive structure for representing impure computations \cite{itree}.
  ITrees are mechanized in Rocq, and come associated with several notions of behavioral equivalences, including ones specifically designed to prove security properties such as non-interference \cite{itree_sec}.
  As our proof quintuples, these equivalences are defined as coinductive relations, and come associated with high-level reasoning principles facilitating proofs.
  However, to the best of our knowledge, ITrees do not feature reasoning principles for direct 4-ary reasoning in the style of our deductive system.
  \jana{Remove that paragraph to shorten? After all, our focus is not on coinductive proofs anymore... Alternatively, remove the next paragraph?}

  \paragraph*{Coinductive Proofs for Hyperproperties}

  Recently, Correnson and Finkbeiner applied parameterized coinduction
  to the verification of temporal hyperproperties \cite{hyco}.
  Their mechanized framework, HyCo, supports the verification of a large class of hyperproperties.
  However, contract satisfaction is out of the scope their proof framework.
  In particular, HyCo does not support asynchronous reasoning in the style of our
  coinductive-inductive relation.

  \section*{Data Availability}

  The Rocq development accompanying this paper is available on Zenodo at the following address: \begin{center}
    \href{https://doi.org/10.5281/zenodo.19054166}{https://doi.org/10.5281/zenodo.19054166}
  \end{center}

  \section*{Acknowledgements}
  This work was supported by the Deutsche Forschungsgemeinschaft (DFG, German Research Foundation) under Germany’s Excellence Strategy - EXC 2092 CASA - 390781972 and by the European Research Council (ERC) Grant HYPER (No. 101055412). Views and opinions expressed are however those of the authors only and do not necessarily reflect those of the European Union or the European Research Council Executive Agency. Neither the European Union nor the granting authority can be held responsible for them. A. Correnson carried out this work as a member of the Saarbr\"ucken Graduate School of Computer Science.

  \bibliographystyle{ACM-Reference-Format}
  \bibliography{references}

  \ifthenelse{\boolean{fullversion}}{

  \appendix

  \section{Full Architectural and Hardware Semantics}
  \label{app:fullSemantics}
  
  Figures~\ref{fig:vanilla-arch}~and~\ref{fig:vanilla-hard} contain the full vanilla architectural and hardware semantics.

  \begin{figure}[ht!]
    \begin{mathpar}
    \inferrule[Add]
    { v = a(x_2) + k}
    {
      \tup{m, a, \pc, \tadd{x_1}{x_2}{k}} \archsingleStep{} \tup{m, a[x_1 \mapsto v], \pc + 1}
    }

    \inferrule[Load]
    { v = m(a(x_2)) }
    {
      \tup{m, a, \pc, \pload{x_1}{x_2}} \archsingleStep{} \tup{m, a[x_1 \mapsto v], \pc + 1}
    }

    \inferrule[Branch]
    {\pc' = (a(x) \overset{?}{=} 0) ~?~ l : (\pc + 1)}
    {
      \tup{m, a, \pc, \pjz{x}{l}} \archsingleStep{} \tup{m, a, \pc'}
    }
    \end{mathpar}
      \caption{Vanilla architectural semantics.}\label{fig:vanilla-arch}
  \end{figure}%

  \begin{figure}[ht!]
    \begin{mathpar}
      \inferrule[Load]
      {
        (\sigma, \pload{x_1}{x_2}) \archsingleStep{} \sigma'
      }
      {
        \tup{\sigma, c, \pload{x_1}{x_2}} \hwsingleStep{c} \tup{\sigma', a(x_2) \cons c}
      }

      \inferrule[Other]
      {
        (\sigma, i) \archsingleStep{} \sigma'
      }
      {
        \tup{\sigma, c, i} \hwsingleStep{c} \tup{\sigma', c}
      }
      \end{mathpar}
      \caption{Vanilla hardware semantics.}
      \label{fig:vanilla-hard}
  \end{figure}

  \section{Example 2: Out-of-Order Scheduler}
  \label{app:OOOexample}
  This section presents the proof of our second example, the out-of-order scheduler in more detail. The full formalization and proof can be found in our accompanying Rocq development.
  This example gives a proof that out-of-order execution does not impact the leakage behavior if it preserves functional correctness. This is not trivial, as the out-of-order execution impacts the cache, which is observable by the attacker.
  We define a simple out-of-order scheduler that can decide to delay the execution of an instruction for one step by putting it on a heap. If there is an instruction on the heap, it is the one executed next.
  \jana{Revisit this text once we know what we put in the main paper.}

  \begin{figure}
\begin{mathpar}

    \inferrule[Execute]
    {
      \Sigma(s.\pc) = \mathit{execute} \\
      (s, P(s.\pc)) \hwsingleStep{o} s'
    }
    {
      \tup{s, \bot} \hwstep{o} \tup{s', \bot}
    }

    \inferrule[Execute Heap]
    {
      (s, i) \hwsingleStep{o} s'
    }
    {
      \tup{s, i} \hwstep{o} \tup{s'[\pc \mapsto s.\pc], \bot}
    }

    \inferrule[Delay]
    {
      P(s.\pc) = i_1 \\
      \Sigma(s.\pc) = \mathit{delay} \\
      P(s.\pc + 1) = i_2 \\
      \mathit{delayable}(i_1, i_2) \\
      s[\pc \mapsto s.\pc + 1] \hwsingleStep{o} s'
    }
    {
      \tup{s, \bot} \hwstep{o} \tup{s', i_1}
    }
    \end{mathpar}
    \centering
    \begin{align*}
      \mathit{delayable}(i_1, i_2) &= 
            i_1 \neq \pjz{\cdot}{\cdot} \land
      \mathit{des}(i_1) \neq des(i_2) \land  
      \mathit{des}(i_1) \neq src(i_2) \land 
      \mathit{src}(i_1) \neq des(i_2)
    \end{align*}
    \begin{minipage}{0.45\linewidth}
      \begin{alignat*}{2}
      &\mathit{src}(\tadd{x_1}{x_2}{k}) &&=  x_2 \\
      &\mathit{src}(\pload{x_1}{x_2}) &&=  x_2 \\
      &\mathit{src}(\pjz{x}{l}) &&= x
    \end{alignat*}
    \end{minipage}%/
    \begin{minipage}{0.45\linewidth}
      \begin{alignat*}{2}
      &\mathit{des}(\tadd{x_1}{x_2}{k}) &&= x_1  \\
      &\mathit{des}(\pload{x_1}{x_2}) &&=  x_1  \\
      &\mathit{des}(\pjz{x}{l}) &&= s.\pc
    \end{alignat*}
    \end{minipage}%
  \caption{Hardware semantics with out-of-order execution.}\label{fig:ooosem}
  \end{figure}

  \paragraph{Out-of-Order Hardware Semantics}
  The semantics of the out-of-order scheduler is given in \Cref{fig:ooosem}. It uses a scheduling function $\Sigma$, which depends on the program counter.
  The functionality of the out-of-order scheduler is implemented in rule \textsc{Delay}.
  If the heap is empty, the scheduler can decide to delay the current instruction and execute the next one. This is only possible if the instruction is \emph{delayable}. This predicate is encoded with a predicate that checks if one instruction impacts the execution of the other. The predicate is defined as expected: two instructions can be swapped if the components one instruction writes to are not read or written by the other instruction.
  Note that the combination of the predicate and the scheduling decision can result in the execution getting stuck.
  In our formalization, we only consider valid schedulers — that is, ones that return delay only  when the hardware state is delayable, but not necessarily every time it is.
  %  since we assume infinite traces, we define the semantics to loop in the last state in that case.

  \begin{figure}
    \begin{mathpar}   
      \inferrule[Step]
      { 
        \sigma \archsingleStep{} \sigma' \\
        \cstepstyle{o} =
        {
          \begin{cases}
            \cstepstyle{\sigma.a(x_2)} & \text{if } P(\sigma.\pc) = \pload{x_1}{x_2}\\
            \cstepstyle{\sigma.a(x) \,\overset{?}{=}\, 0} & \text{if } P(\sigma.\pc) = \pjz{x}{l} \\
            \cstepstyle{\bot} & \text{otherwise}
          \end{cases}
        }
      }
      {
        \sigma  \amstep{o}  \sigma'
      } 
    \end{mathpar}
    \caption{Sequential contract.}
    \label{fig:seq-contract}
  \end{figure}

  \paragraph{Sequential Contract}
  The out-of-order execution can be abstracted away with a leakage contract that follows a simple in-order semantics that we formalize in \Cref{fig:seq-contract}. The contract just exposes accessed addresses and the result of branching decisions. It is similar to the sequential contract CT-SEQ described previously~\cite{hwswcontracts}. 

  \paragraph{Contract Satisfaction Proof}
  
To establish that the sequential contract is satisfied by our simple out-of-order execution hardware model, we rely on the soundness of our proof system for relative trace equality (Theorem \ref{thm:quintuples-sound-complete}).
Accordingly, for any initial memories and register assignments $m_1,m_2,a_1, a_2$, for any initial cache $c$ and initial program pointer $\pc$, and for any valid scheduler $\Sigma$, it suffices to show that:
  \begin{align*}
    \gquadruple{\emptyset}
      {(m_1, a_1, \pc)}
      {(m_2, a_2, \pc)}
      {((m_1, a_1, \pc, c),\bot)}
      {((m_2, a_2, \pc, c), \bot)}
  \end{align*}

As in the always-mispredict example, we apply the \textsc{Invariant} rule to establish the relational invariant needed for the proof:
  \begin{align*}
    I = \{ \ (s_1, s_2, (h_1, \bot), (h_2, \bot)) \mid \
    s_i.m = h_i.m \wedge
    s_i.a = h_i.a \wedge
    s_i.\pc = h_i.\pc \wedge
    s_1.\pc = s_2.\pc \wedge
    h_1.c = h_2.c \
    \}
  \end{align*}
  After applying the \textsc{Invariant} rule, we now have $I$ as a (guarded) hypothesis.
  \begin{align*}
    \gquadruple{I}
      {(m_1, a_1, \pc)}
      {(m_2, a_2, \pc)}
      {((m_1, a_1, \pc, c),\bot)}
      {((m_2, a_2, \pc, c), \bot)}
  \end{align*}

  We perform a case analysis on the results of scheduler $\Sigma(\pc)$. When the scheduler returns $\mathit{execute}$, the hardware step is fully synchronized with the contract. In this case, we only need to show that after both the contract and hardware take a step, the relational invariant $I$ still holds.
    \begin{align*}
    &\gquadruple{I}
      {(m_1, a_1, \pc)}
      {(m_2, a_2, \pc)}
      {((m_1, a_1, \pc, c),\bot)}
      {((m_2, a_2, \pc, c), \bot)}\\
    \pcomment{\textsc{C-Step', H-Step}} 
    &\quadruple{I}
      {\trans_\sw(m_1, a_1, \pc)}
      {\trans_\sw(m_2, a_2, \pc)}
      {\trans_\hw((m_1, a_1, \pc, c),\bot)}
      {\trans_\hw((m_2, a_2, \pc, c), \bot)} \\
    \pcomment{\textsc{Cycle}} &
      (\trans_\sw(\ldots),
      \trans_\sw(\ldots),
      \trans_\hw(\ldots,\bot),
      \trans_\hw(\ldots, \bot)) \in I
  \end{align*}
We note that in order to use \textsc{H-Step}, we first need to prove equality of leakage: \[
  \leak_\hw((m_1, a_1, \pc, c), \bot) = \leak_\hw((m_2, a_2, \pc, c), \bot)
\]
According to our cache-based attacker model, only the cache is leaked. Since the cache is the same in both hardware states at this point, the equality trivially holds (the leakage is $c$ on both sides).
We also note that to use the rule \textsc{Cycle},
we need to show that the invariant $I$ still holds after
executing the current instruction.
Fortunately, using the rule \textsc{C-Step'} generates the assumption $\leak_\sw(m_1, a_1, \pc, c) = \leak_\sw(m_1, a_1, \pc, c)$. This assumption is sufficient to prove that the invariant is restored.
The only non-trivial case is when a load instruction is executed. In this case the cache changes, but since the contract leaks the loaded address, the cache remains the same in both hardware states and the invariant $I$ is restored.

When the scheduler returns $\mathit{delay}$, there is a mismatch between the contract and the hardware execution.
The hardware executes the instruction at $\mi{pc} + 1$ first, and then the instruction at $\mi{pc}$.
The contract, however, executes instructions in order ($\mi{pc}$, then $\mi{pc} + 1$).
To handle this misalignment, the trick is to first
simulate \emph{two} contract steps using \textsc{C-Step'}, and then simulate \emph{two} hardware steps using \textsc{H-Step}.
\begin{align*}
    &\gquadruple{I}
      {(m_1, a_1, \pc)}
      {(m_2, a_2, \pc)}
      {((m_1, a_1, \pc, c),\bot)}
      {((m_2, a_2, \pc, c), \bot)}\\
    \pcomment{\textsc{C-Step'} $\times$ 2} &\quadruple{I}
      {\trans_\sw(\trans_\sw(m_1, a_1, \pc))}
      {\trans_\sw(\trans_\sw(m_2, a_2, \pc))}
      {((m_1, a_1, \pc, c),\bot)}
      {((m_2, a_2, \pc, c), \bot)}\\
    \pcomment{H-Step $\times$ 2} &\quadruple{I}
      {\trans_\sw(\trans_\sw(m_1, a_1, \pc))}
      {\trans_\sw(\trans_\sw(m_2, a_2, \pc))}
      {\trans_\hw(\trans_\hw((m_1, a_1, \pc, c),\bot))}
      {\trans_\hw(\trans_\hw((m_2, a_2, \pc, c), \bot))}\\
  \pcomment{\textsc{Cycle}}
      &(\trans_\sw(\trans_\sw(\ldots)),
      \trans_\sw(\trans_\sw(\ldots)),
      \trans_\hw(\trans_\hw(\ldots)),
      \trans_\hw(\trans_\hw(\ldots))) \in I
  \end{align*}

To justify these five proof steps, we have a quite a few things to unpack: we need to justify the applicability of \textsc{H-Step}, and we need to prove that the invariant $I$ was indeed restored.

We start by noting that the applications of \textsc{C-Step'} generate two equality assumptions: \begin{align*}
  \leak_\sw(\ldots) = \leak_\sw(\ldots) \wedge
  \leak_\sw(\trans_\sw(\ldots)) = \leak_\sw(\trans_\sw(\ldots))
\end{align*}
Using \textsc{H-Step} two times requires us to prove the two following equality: \begin{align*}
  \leak_\hw((\ldots, c),\bot) = 
  \leak_\hw((\ldots, c), \bot) \wedge
  \leak_\hw(\trans_\hw(\ldots, \bot)) = 
  \leak_\hw(\trans_\hw(\ldots, \bot))
\end{align*}
To justify that the two applications of \textsc{H-Step} were valid, we can therefore prove the following implication:
\begin{align*}
  &\leak_\sw(\ldots) = \leak_\sw(\ldots)
  \wedge
  \leak_\sw(\trans_\sw(\ldots)) = \leak_\sw(\trans_\sw(\ldots)) \implies\\
  &\leak_\hw((\ldots, c),\bot) = 
  \leak_\hw((\ldots, c), \bot) \wedge
  \leak_\hw(\trans_\hw(\ldots, \bot)) = 
  \leak_\hw(\trans_\hw(\ldots, \bot))
\end{align*}

Since a valid scheduler $\Sigma$ only returns $\mathit{delay}$ if the state is $\mathit{delayable}$, this also implies that instruction $P(\pc)$ cannot be a branch. Thus, we have 6 combination of instructions to consider.
As the cases are very similar, we only focus on the case of two successive loads as a representative example. I.e., we suppose $P(\pc) = i_1 = \pload{x_1}{x_2}$ and $P(\pc + 1) = i_2 = \pload{x_1'}{x_2'}$.

First, let us examine how our implication can be proved in this case. Since both instructions are loads,
the contract leaks the loaded addresses.
On the hardware side, the cache is leaked. After unfolding the definitions of $\leak_\sw$ and $\leak_\hw$, our implication therefore becomes:
\begin{align*}
  &a_1(x_2) = a_2(x_2) \wedge 
  \trans_\sw(\ldots).a(x_2') =
  \trans_\sw(\ldots).a(x_2') \implies \\
  &c = c \wedge
  \trans_\hw(\ldots, \bot).a_1(x_2) =
  \trans_\hw(\ldots, \bot).a_2(x_2)
\end{align*}
The equality $c = c$ is trivially true, but we still need to show \[
  \trans_\hw(\ldots, \bot).a(x_2) = \trans_\hw(\ldots, \bot).a(x_2)
\]
Unfortunately, the equality assumptions generated by the contract do not immediately match:
we only know $a_1(x_2) = a_2(x_2)$.
However, we know that the two successive loads satisfy the $\mi{delayable}$ predicate.
By definition (see Figure~\ref{fig:ooosem}), this means that reordering their executions does not affect the
values they read.
In particular, $a_1(x_2)$ must be equal to $\trans_\hw(\ldots, \bot).a_1(x_2)$,
and $a_2(x_2)$ must be equal to $\trans_\hw(\ldots, \bot).a_2(x_2)$.
This concludes the leakage equality proof.

We still need to show that the invariant was restored after 
the two contract steps and the two hardware steps.
In the case of two successive loads, it is easy to verify that swapping their execution order does not impact the final state. Formally, assuming $i_1$ and $i_2$ are loads, for any microarchitectural states $h, h_{i_1}, h_{i_2}, h'$ can show:
\begin{align*}
  (h, i_1) \hwsingleStep{} h_{i_1} \wedge
  (h_{i_1}, i_2) \hwsingleStep{} h' \iff
  (h, i_2) \hwsingleStep{} h_{i_2} \wedge
  (h_{i_2}, i_1) \hwsingleStep{} h'
\end{align*}

Since we can swap the execution of these two load instructions without changing the hardware state, the contract states and the architectural states of the hardware states are synchronized after two steps. At this point, the reorder buffer is empty again, and the invariant $I$ is restored.

\section{Up-to Functions}
\label{sec:up-to-functions}

Section~\ref{sec:uptos} extends our deductive
system with several up-to techniques to simplify proofs.
The soundness of each up-to technique is justified by picking a compatible function $f$ and instantiating the rule \textsc{Up-To}.
The following table explicitly describes which function is chosen to justify each up-to technique.

\begin{center}
\renewcommand{\arraystretch}{1.5}
\small
\begin{tabular}{|c|l|}
  \hline
  \textbf{Rule} & \textbf{Corresponding Compatible Function}
  \\\hline
  \textsc{C-Swap} & $f(R) \triangleq \{ \ (s_1, s_2, h_1, h_2) \mid (s_2, s_1, h_1, h_2) \in R \ \}$
  \\\hline
  \textsc{H-Swap} & $f(R) \triangleq \{ \ (s_1, s_2, h_1, h_2) \mid (s_1, s_2, h_2, h_1) \in R \ \}$
  \\\hline
  \textsc{C-Leak-Eq} & $f(R) \triangleq \{ \ (s_1, s_2, h_1, h_2) \mid (s_1, s'_1) \in \texttt{bisim} \wedge (s'_1, s_2, h_1, h_2) \in R \ \}$
  \\\hline
  \textsc{H-Leak-Eq} & $f(R) \triangleq \{ \ (s_1, s_2, h_1, h_2) \mid (h_1, h'_1) \in \texttt{bisim} \wedge (s_1, s_2, h_1, h'_1) \in R \ \}$
  \\\hline
  \textsc{Reduce-C-Leakage} & $f(R) \triangleq \{ \ (s_1, s_2, h_1, h_2) \mid (s_1, s_2, s'_1, s'_2)  \in \texttt{rbisim} \wedge (s_1', s'_2, h_1, h_2) \in R \ \}$
  \\\hline
  \textsc{Augment-H-Leakage} & $f(R) \triangleq \{ \ (s_1, s_2, h_1, h_2) \mid (h'_1, h'_2, h_1, h_2) \in \texttt{rbisim}_\texttt{lockstep} \wedge (s_1, s_2, h'_1, h'_2) \in R \ \}$
  \\\hline
\end{tabular}
\end{center}

}{}

\end{document}